\newcommand{\seeAppendix}[1]{in Section~\ref{#1} of the Appendix}
\theoremstyle{plain}
\newtheorem{theorem}{Theorem}
\newtheorem{lemma}[theorem]{Lemma}
\newtheorem{corollary}[theorem]{Corollary}
\theoremstyle{definition}
\newtheorem{Definition}[theorem]{Definition}
\theoremstyle{remark}
\newtheorem{example}[theorem]{Example}
\newtheorem{remark}[theorem]{Remark}
\newcommand{\limpl}{\Rightarrow}
\newcommand{\liff}{\Leftrightarrow}
\newcommand{\emk}{\mathop{\$}}
\newcommand{\Sigmaend}{\Sigma_{\emk}}
\newcommand{\powerset}[1]{\wp(#1)}
\newcommand{\Orm}{\mathrm{O}}
\newcommand{\async}[1]{\ensuremath{\mathfrak{A}{\ifthenelse{\not \equal{#1}{}}{_{#1}}{}}}}
\newcommand{\algo}[1]{\ensuremath{\mathfrak{I}{\ifthenelse{\not \equal{#1}{}}{({#1})}{}}}}
\newcommand{\valc}[1]{\ensuremath{\mathrm{ACC}(#1)}}
\newcommand{\fakepar}[1]{\vskip 1pt\textbf{#1}}
\newcommand{\lang}[1]{\mathcal{L}(#1)}
\newcommand{\allmodels}[2]{\ldoubsq{#1}\rdoubsq_{#2}}
\newcommand{\pad}{\Box}
\newcommand{\cat}{\mathbin{\circ}}
\newcommand{\comp}[1]{\overline{#1}}
\newcommand{\aut}[1]{\mathcal{A}_{\nref{#1}}}
\newcommand{\autd}[1]{\mathcal{A}_{#1}}
\newcommand{\rest}{\mathit{rest}}
\newcommand{\Result}{\ensuremath{\text{\lstinline|Result|}}}
\newcommand{\len}{\mathit{len}}
\newcommand{\dec}{\mathit{dec}}
\newcommand{\vc}{\mathsf{vc}}
\newcommand{\cvc}{\widetilde{\vc}}
\newcommand{\ce}{\mathsf{ice}}
\begin{document}

\title{Asynchronous Multi-Tape Automata Intersection: \\
Undecidability and Approximation}

\author{Carlo A. Furia \\
ETH Zurich, Switzerland \\
\url{caf@inf.ethz.ch}
}

\maketitle

\begin{abstract}
  When their reading heads are allowed to move completely
  asyn\-chro\-nous\-ly, finite-state automata with multiple tapes
  achieve a significant expressive power, but also lose useful closure
  properties---closure under intersection, in particular.  This paper
  investigates to what extent it is still feasible to use multi-tape
  automata as recognizers of polyadic predicates on words.  On the
  negative side, determining whether the intersection of asynchronous
  multi-tape automata is expressible is not even semidecidable.  On
  the positive side, we present an algorithm that computes
  under-approximations of the intersection; and discuss simple conditions
  under which it can construct complete intersections.  A prototype
  implementation and a few non-trivial examples demonstrate the
  algorithm in practice.
\end{abstract}

\section{Automata As Decision Procedures} \label{sec:introduction}
Standard finite-state automata are simple computing devices widely used in computer science.
They define a robust class of language acceptors, as each automaton instance $A$ identifies a set $\lang{A}$ of words  
that it accepts as input.
The connection between finite-state automata and predicate logic has been well-known since the work of B\"uchi~\cite{Bu60,Bu62} and others~\cite{Tr58,El61}, and is widely used in applications such as model-checking: each automaton $A_P$ can be seen as \emph{implementing} a monadic (that is, unary) predicate $P(x)$, in the sense that the set  $\lang{A_P}$ of words accepted by the automaton corresponds to the set $\{x \mid x \models P(x) \}$ of models of the predicate.
Logic connectives (negation $\neg$, conjunction $\land$, etc.) translate into composition operations on automata (complement, intersection $\cap$, etc.), so that finite-state automata can capture the semantics of arbitrary first-order monadic formulas whose interpreted atomic predicates are implementable.
This gives a very efficient way to decide the satisfiability of monadic logic formulas representable by finite-state automata: unsatisfiability of a formula corresponds to emptiness of its automaton, which is testable efficiently in time linear in the automaton size.

It is natural to extend this framework \cite{Berstel79-book,Saka09-book} to represent $n$-ary predicates, for $n > 1$, by means of \emph{multi-tape} finite-state automata.
An $n$-tape automaton $A_R$ is a device that accepts $n$-tuples of words, corresponding to the set of models of a predicate $R(x_1, \ldots, x_n)$ over $n$ variables.
Section~\ref{sec:preliminaries} defines multi-tape automata and summarizes some of their fundamental properties. 
It turns out that the class of multi-tape automata (in their most expressive \emph{asynchronous} variant) is not as robust as one-tape automata.
In particular, multi-tape automata\footnote{We do not consider more powerful  classes of multi-tape automata, such as pushdown automata, as they typically possess even fewer closure or decidability properties~\cite{Ibarra78} unless they are significantly restricted to specific classes of languages~\cite{EsparzaGM12}.} are not closed under intersection~\cite{CAF-survey}, and hence the conjunction of $n$-ary predicates is not implementable in general. 

This paper investigates the magnitude of this hurdle in practice.
On the negative side, we prove that determining whether the intersection of two multi-tape automata $A, B$ is expressible as an automaton is neither decidable nor semi-decidable.
On the positive side, we provide an algorithm $\algo{A, B, d}$ that computes an under-approximation of the intersection $A \cap B$ of $A$ and $B$, bounded by a given maximum delay $d$ between heads on different tapes.
We also detail a simple sufficient syntactic condition on $A$ and $B$ for the algorithm to return complete intersections.
Based on these results, we implemented the algorithm and tried it on a number of natural examples inspired by the verification conditions of programs operating on sequences.

\section{Preliminaries} \label{sec:preliminaries}

$\integers$ is the set of integer numbers, and $\naturals$ is the set of natural numbers $0, 1, \ldots$.
For a (finite) set $S$, $\powerset{S}$ denotes its powerset.
For a finite nonempty \emph{alphabet} $\Sigma$, $\Sigma^*$ denotes the set of all finite sequences $\sigma_1 \cdots \sigma_n$, with $n \geq 0$, of symbols from $\Sigma$ called \emph{words} over $\Sigma$; when $n = 0$, $\epsilon \in \Sigma^*$ is the \emph{empty} word.
$|s| \in \naturals$ denotes the length $n$ of a word $s = \sigma_1 \,\ldots\, \sigma_n$.
An \emph{$n$-word} is an $n$-tuple $\langle s_1, \ldots, s_n \rangle \in (\Sigma^*)^n$ of words over $\Sigma$. 

Given a sequence $s = x_1 \cdots x_n$ of objects, a \emph{permutation} $\pi: \{1, \ldots, n\} \to \{1, \ldots, n\}$ is a bijection that rearranges $s$ into $\pi(s) = \pi_1 \cdots \pi_n$ with $\pi_i = x_{\pi(i)}$ for $i = 1, \ldots, n$.
An \emph{inversion} of a permutation $\pi$ of $s$ is a pair $(i, j)$ of indices such that $i < j$ and $\pi(i) > \pi(j)$.
For example, the permutation that turns $a_4b_1b_2a_5a_6a_7b_3$ into $b_1b_2b_3a_4a_5a_6a_7$ has $6$ inversions.

\subsection{Multi-Tape Finite Automata}
A finite-state automaton with $n \geq 1$ tapes scans $n$ read-only input tapes, each with an independent head. At every step, the current state determines the tape to be read, and the transition function defines the possible next states based on the current state and the symbols under the reading head.
A special symbol $\emk$ marks the right end of each input tape; $\Sigmaend$ denotes the extended alphabet $\Sigma \cup \{\emk\}$.
\begin{Definition}[$n$-tape automaton] \label{def:n-tape-aut}
An \emph{$n$-tape finite-state automaton} $A$ is a tuple $\langle \Sigma, Q, \linebreak \delta, Q_0, F, T, \tau\rangle$ where: $\Sigma$ is the input alphabet, with $\emk \not\in \Sigma$; $T = \{t_1, \ldots, t_n\}$ is the set of tapes; $Q$ is the finite set of states; $\tau: Q \to T$ assigns a tape to each state; $\delta: Q \times \Sigmaend \to \powerset{Q}$ is the (nondeterministic) transition function; $Q_0 \subseteq Q$ are the initial states; $F \subseteq Q$ are the accepting (final) states.
\end{Definition}
We write $A(t_1, \ldots, t_n)$ when we want to emphasize that $A$ operates on the $n$ tapes $t_1, \ldots, t_n$; $A(t_1', \ldots, t_n')$ denotes an instance of $A$ with each tape $t_i$ renamed to $t_i'$.
Without loss of generality, assume that the accepting states have no outgoing edges: $\delta(q_F, \sigma) = \emptyset$ for all $q_F \in F$.
Also, whenever convenient we represent the transition function $\delta$ as a relation, that is the set of triples $(q, \sigma, q')$ such that $q' \in \delta(q, \sigma)$.

A \emph{configuration} of an $n$-tape automaton $A$ is an $(n+1)$-tuple $\langle q, y_1, \ldots, y_n \rangle \in Q \times (\Sigmaend^*)^n$, where $q \in Q$ is the current state and, for $1 \leq k \leq n$, $y_k$ is the input on the $k$-th tape still to be read.
A \emph{run} $\rho$ of $A$ on input $x = \langle x_1, \ldots, x_n \rangle \in (\Sigma^{*})^n$ is a sequence of configurations $\rho = \rho_0 \cdots \rho_m$ such that: (1) $\rho_0 = \langle q_0, x_1\emk, \ldots, x_n\emk \rangle$ for some initial state $q_0 \in Q_0$; and (2) for $0 \leq k < m$, if $\rho_k = \langle q, y_1, \ldots, y_n \rangle$ is the $k$-th configuration---with $t_h = \tau(q)$ the tape read in state $q$, and $y_h = \sigma\, y_h'$ with $\sigma \in \Sigmaend$ and $y_h' \in \Sigmaend^*$ on the $h$-th tape---then $\rho_{k+1} = \langle q', y_1', \ldots, y_n'\rangle$ with $q' \in \delta(q, \sigma)$ and $y_i' = y_i$ for all $i \neq h$.
A run $\rho$ is \emph{accepting} if $\rho_m = \langle q_F, y_1, \ldots, y_n \rangle$ for some accepting state $q_F \in Q_F$.
$A$ accepts an $n$-word $x$ if there exists an accepting run of $A$ on $x$.
The \emph{language} accepted (or recognized) by $A$ is the set $\lang{A}$ of all $n$-words that $A$ accepts.
The \emph{$n$-rational languages} are the class of languages accepted by some $n$-tape automaton.
Whenever $n$ is clear from the context, we will simply write ``words'' and ``automata'' to mean ``$n$-words'' and ``$n$-tape automata''.

\begin{Definition}
An $n$-tape automaton $A$ is: 
\textbf{deterministic}
if $|Q_0| \leq 1$ and $|\delta(q, \sigma)| \leq 1$ for all $q, \sigma$;
\textbf{synchronous} for $s \in \naturals$ if every run of $A$ is such that any two heads that have not scanned their whole input are no more than $s$ positions apart;
\textbf{asynchronous} if it is not synchronous for any $s$.
\end{Definition}

\begin{example}\label{ex:eq-cat-automata}
Figure~\ref{fig:Aeq} shows a synchronous deterministic automaton $\aut{fig:Aeq}$ with two tapes $X, Y$ that recognizes pairs of equal words over $\{a,b\}$.
Each state is labeled with the tape read and with a number for identification (the final state's tape label is immaterial, and hence omitted).
$\aut{fig:Aeq}$ reads one letter on tape $Y$ immediately after reading one letter on tape $X$; hence it is synchronous for $s=1$.
Automaton $\aut{fig:Acat}$ in Figure~\ref{fig:Acat} recognizes triples of words such that the word on tape $Z$ equals the concatenation of the words on tapes $X$ and $Y$ (ignoring the end-markers). It is asynchronous because the length of $X$ is not bounded: when the reading on tape $Y$ starts, the head on $Z$ is at a distance equal to the length of the input on $X$.
\end{example}

\begin{figure}[!ht]
\centering
\begin{tikzpicture}[->,semithick, accepting/.style=accepting by double,initial text=,
  every state/.style={draw,minimum size=6mm,inner sep=1pt},
  state/.style=state with output,
  shorten >=2pt, shorten <=2pt, node distance=20mm]
\node [state,initial,initial where=above] (eq1) {$X$ \nodepart{lower} $1$};
\node [state] (eq2) [left=of eq1] {$Y$ \nodepart{lower} $2$};
\node [state] (eq3) [right=of eq1] {$Y$ \nodepart{lower} $3$};
\node [state] (eq4) [right=of eq3] {$Y$ \nodepart{lower} $4$};
\node [state,accepting] (eq5) [right=of eq4] {\nodepart{lower} $5$};

\path (eq1) edge [bend right] node [above] {$a$} (eq2);
\path (eq1) edge [bend left] node [above] {$b$} (eq3);
\path (eq2) edge [bend right] node [above] {$a$} (eq1);
\path (eq3) edge [bend left] node [above] {$b$} (eq1);
\path (eq1.south) edge [bend right=20] node [near end,above] {$\emk$} (eq4);
\path (eq4) edge node [above] {$\emk$} (eq5);
\end{tikzpicture}
\caption{2-tape deterministic synchronous automaton $\aut{fig:Aeq}$.}
\namelabel{fig:Aeq}{=}
\end{figure}
\begin{figure}[!ht]
\centering
\begin{tikzpicture}[->,semithick, accepting/.style=accepting by double,initial text=,
  every state/.style={draw,minimum size=6mm,inner sep=1pt},
  state/.style=state with output,
  shorten >=2pt, shorten <=2pt, node distance=2mm and 10mm]
\node [state,initial,initial where=below] (cat1) {$X$\nodepart{lower} $1$};
\node [state] (cat2) [above left=of cat1] {$Z$\nodepart{lower} $2$};
\node [state] (cat3) [above right=of cat1] {$Z$\nodepart{lower} $3$};
\node [state] (cat4) [right=32mm of cat1] {$Y$\nodepart{lower} $4$};
\node [state] (cat5) [above left=of cat4] {$Z$\nodepart{lower} $5$};
\node [state] (cat6) [above right=of cat4] {$Z$\nodepart{lower} $6$};
\node [state] (cat7) [right=32mm of cat4] {$Z$\nodepart{lower} $7$};
\node [state,accepting] (cat8) [right=12mm of cat7] {\nodepart{lower} $8$};

\path (cat1) edge [bend right] node [above] {$a$} (cat2);
\path (cat1) edge [bend left] node [above] {$b$} (cat3);
\path (cat2) edge [bend right] node [above] {$a$} (cat1);
\path (cat3) edge [bend left] node [above] {$b$} (cat1);

\path (cat4) edge [bend right] node [above] {$a$} (cat5);
\path (cat4) edge [bend left] node [above] {$b$} (cat6);
\path (cat5) edge [bend right] node [above] {$a$} (cat4);
\path (cat6) edge [bend left] node [above] {$b$} (cat4);

\path (cat1) edge [bend right] node [above] {$\emk$} (cat4);
\path (cat4) edge [bend right] node [above] {$\emk$} (cat7);
\path (cat7) edge node [above] {$\emk$} (cat8);
\end{tikzpicture}
\caption{3-tape deterministic asynchronous automaton $\aut{fig:Acat}$.}
\namelabel{fig:Acat}{\cat}
\end{figure}

\subsection{Closure Properties and Decidability} \label{sec:clos-prop-decid}

Automata define languages, which are sets of words; correspondingly, we are interested in the closure properties of automata with respect to set-theoretic operations on their languages.
Specifically, we consider closure under complement, intersection, and union; and the \emph{emptiness} problem: given an automaton $A$, decide whether $\lang{A} = \emptyset$, that is whether it accepts some word.
The complement of a language $L$ over $n$-words over $\Sigma$ is taken with respect to the set $(\Sigma^*)^n$; the intersection $L_1 \cap L_2$ is also applicable when $L_1$ is a language over $n$-words and $L_2$ a language over $m$-words, with $m > n$: define $L_1 \cap L_2$ as the set of $m$-tuples $\langle x_1, \ldots, x_m \rangle$ such that $\langle x_1, \ldots, x_n \rangle \in L_1$ and $\langle x_1, \ldots, x_m \rangle \in L_2$; a similar definition works for unions.
We lift set-theoretic operations from languages to automata; for example, the intersection $A_3 = A_1 \cap A_2$ of two automata $A_1, A_2$ is an automaton $A_3$ such that $\lang{A_3} = \lang{A_1} \cap \lang{A_2}$, when it exists; we assume that intersected automata share the tapes with the same name (in the same order).
The rest of this section summarizes the fundamental closure properties of multi-tape automata; see~\cite{CAF-survey} for a more detailed presentation and references.

\fakepar{Synchronous automata}~\cite{IbarraT12-CIAA,IbarraT12-TCS}
define a very robust class of languages: they have the same expressiveness whether deterministic or nondeterministic; they are closed under complement, intersection, and union; and emptiness is decidable.
In fact, computations of synchronous $n$-tape automata can be regarded as computations of standard single-tape automata over the $n$-track alphabet $(\Sigma \cup \{\pad\})^n$, where the fresh symbol $\pad$ pads some of the $n$ input strings so that they all have the same length.
Under this convention, the standard constructions for finite-state automata apply to synchronous automata as well.
Most applications of multi-tape automata to have targeted synchronous automata (see Section~\ref{sec:related-work}), which have, however, a limited expressive power.

\fakepar{Asynchronous automata}
are strictly more expressive than synchronous ones, but are also less robust:
\begin{itemize}
\item
Nondeterministic asynchronous automata are strictly more expressive than deterministic ones.
\item
\emph{Deterministic} asynchronous automata are closed under complement, using the standard construction that complements the accepting states.
They are not closed under union, although the union of two deterministic asynchronous automata always is a nondeterministic automaton.
They are not closed under intersection because, intuitively, the parallel computations in the two intersected automata may require the heads on the shared tapes to diverge.
\item
\emph{Nondeterministic} asynchronous automata are not closed under complement or intersection, but are closed under union using the standard construction that takes the union of the transition graphs.
\item
\emph{Emptiness} is decidable for asynchronous automata (deterministic and nondeterministic): it amounts to testing reachability of accepting states from initial states on the transition graph.
\end{itemize}

\section{Multi-Tape Automata: Negative Results} \label{sec:negative-results}
Since multi-tape automata are not closed under intersection, we try to characterize the class of intersections that \emph{are} expressible as automata.
A logical characterization is arduous to get, because conjunction would be inexpressible in general.
Indeed, we can prove some strong undecidability results.

\fakepar{Rational intersection is undecidable.}
The \emph{rational intersection problem} is the problem of determining whether the intersection language $\lang{A} \cap \lang{B}$ of two automata $A$ and $B$ is rational, that is whether it is accepted by some multi-tape automaton.

\begin{theorem} \label{th:not-semidecidable}
The  rational intersection problem is not semidecidable.
\end{theorem}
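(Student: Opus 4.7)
My plan is to reduce from the complement of Post's Correspondence Problem --- which is $\Pi_1$-complete and in particular not recursively enumerable --- to the rational intersection problem. The reduction combines a standard PCP-to-intersection encoding with a product ``gadget'' that turns ``non-empty intersection'' into ``non-rational intersection''.

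Given a PCP instance $I=\{(u_i,v_i)\}_{i=1}^n$, I would build two deterministic asynchronous 2-tape automata $A_I,B_I$ --- each essentially a copy of the concatenation automaton $\aut{fig:Acat}$ generalized to $n$ choices --- whose languages are
\begin{equation*}
\lang{A_I}=\{\langle\pi,u_{i_1}\!\cdots u_{i_k}\rangle:\pi=i_1\!\cdots i_k\in\{1,\ldots,n\}^+\},
\end{equation*}
and analogously $\lang{B_I}$ with the $v_j$'s. By construction $\lang{A_I}\cap\lang{B_I}\neq\emptyset$ iff $I$ has a solution. To move from non-emptiness to non-rationality, fix once and for all a pair $C,D$ of multi-tape automata such that $N:=\lang{C}\cap\lang{D}$ is not rational; such $C,D$ exist by the non-closure under intersection cited in Section~\ref{sec:clos-prop-decid}. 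Given $I$, form $A_I',B_I'$ on the disjoint union of the tapes of $A_I,B_I$ with those of $C,D$, by letting $A_I'$ run $A_I$ and $C$ independently on their respective tape sets, and symmetrically for $B_I'$. Because the two factors act on disjoint tapes,
\begin{equation*}
\lang{A_I'}\cap\lang{B_I'}\;=\;\bigl(\lang{A_I}\cap\lang{B_I}\bigr)\times N.
\end{equation*}
If $I$ has no solution the product is empty and hence rational; if $I$ has a solution, then projecting the product onto the tapes of $C,D$ --- rational relations being closed under projection onto a subset of tapes --- yields precisely the non-rational $N$, so the product itself is non-rational. Thus $I\mapsto(A_I',B_I')$ many-one reduces $\overline{\mathrm{PCP}}$ to rational intersection, and the latter therefore cannot be r.e.

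The main subtlety is the non-rationality argument in the last step: it depends both on exhibiting a concrete witnessing pair $(C,D)$ from the non-closure literature and on closure of rational relations under projection, which \emph{does} hold for asynchronous multi-tape automata even though intersection does not. Keeping the tape sets of the two factors disjoint is essential: it makes intersection commute with the Cartesian product of the languages, sidestepping the very non-closure property we are simultaneously exploiting to build the gadget $N$.
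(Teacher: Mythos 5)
Your proof is correct, but it takes a genuinely different route from the paper's own proof of Theorem~\ref{th:not-semidecidable}. The paper reduces from the regularity problem for the set $\valc{M}$ of valid accepting computations of a Turing machine $M$, which is not semidecidable by a result of Hartmanis: it writes $\{\langle x, x\rangle \mid x \in \valc{M}\}$ as the intersection of two concrete deterministic rational languages $L_M^1$ (a ``staircase'' comparison of consecutive configurations across the two tapes) and $L_M^2$ (tape equality), so that the intersection is rational iff $\valc{M}$ is regular. You instead reduce from the complement of PCP, which is not r.e.\ for elementary reasons, and you manufacture the non-rationality separately through a fixed witness pair $(C,D)$ on disjoint tapes---such a pair is concretely available, e.g., the equality and concatenation automata of Figures~\ref{fig:Aeq} and~\ref{fig:Acat}, whose intersection the paper itself notes is not rational. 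The two steps you flag as subtle do hold: the product of rational relations on disjoint tape sets is rational (sequentially compose the two automata, using the paper's convention that accepting states have no outgoing edges), and projection onto a subset of tapes preserves rationality (turn reads on dropped tapes into $\epsilon$-moves and eliminate them; the paper invokes the one-component case via Rabin--Scott). It is worth noting that your argument is close in spirit to the paper's appendix Theorem~\ref{th:undec-async-intersect}: that PCP-based reduction also has the shape ``intersection rational iff the instance is unsolvable,'' and hence implicitly proves non-semidecidability as well, although the paper claims only undecidability there; the difference is that in the appendix the non-rationality arises from the intersection's own $u^n \# v^n$ structure on a third tape, whereas your product gadget decouples the undecidable-emptiness part from the non-rationality part, which is more modular and avoids the pumping analysis. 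What your route buys is a self-contained proof resting only on PCP, a fixed non-closure witness, and closure under projection, rather than importing Hartmanis's non-trivial theorem as a black box; what the paper's route buys is that its specific languages $L_M^1, L_M^2$ are deterministic and one of them is the synchronous equality relation, facts it immediately reuses to derive the corollary on rational complement and the discussion around Corollary~\ref{th:asynch-and-equality}. Both arguments establish exactly the claimed non-semidecidability, consistent with the $\Sigma_2^0$ upper bound of Remark~\ref{rem:undec-degree}.
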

\begin{proof}
Following~\cite{Hart67,abs-0906-3051}, we consider valid computations of Turing machines.
A single-tape Turing machine $M$ has state set $S$, input alphabet $I$, transition relation $\delta \subseteq Q \times I \times Q \times I \times \{-1, 0, 1\}$; and $s_0, s_F \in S$ respectively are the initial state and the accepting state (unique, without loss of generality).
We can write $M$'s configurations as strings over $I \cup S$ of the form $i_1 \cdots i_k \,s\,i_{k+1} \cdots i_m$ where $i_1 \cdots i_m \in I^*$ is the sequence of symbols on the tape, $s$ is the current state, and the read/write head is over the symbol $i_{k+1}$.
The set $\valc{M}$ of accepting computations contains all words of the form $\# w_1 \# \cdots \# w_m \#$, with $\# \not\in I \cup S$, such that each $w_k$ is a configuration of $M$, $w_1$ is an initial configuration (of the form $s_0 I^*$), $w_m$ is an accepting configuration (of the form $I^* s_F I^*$), and $w_{k+1}$ is a valid successor of $w_k$ according to $\delta$, for all $1 \leq k < m$ (that is, for $w_k = i_1 \cdots i^-\, s\, i^+ \cdots i_n$ and $(s, i^+, s', i', h) \in \delta$ then: if $h = -1$ then $w_{k+1} = i_1 \cdots s' i^- i' \cdots i_n$; if $h = 0$ then $w_{k+1} = i_1 \cdots i^- s i' \cdots i_n$; if $h = +1$ then $w_{k+1} = i_1 \cdots i^- i's \cdots i_n$).
The problem of determining, for a generic $M$, whether $\valc{M}$ is regular is not semidecidable~\cite{Hart67}.

Consider now the language $L_{M^2}$ defined as $\{\langle x, x \rangle \mid x \in \valc{M}\}$.
Since the single-component projection of a rational language is always regular~\cite{RabinScott59}, if $L_{M^2}$ is rational then $\valc{M}$ is regular.
We can express $L_{M^2}$ as the intersection of two languages $L_M^1$ and $L_M^2$.
$L_M^1$ is the set of 2-words $\langle \# u_1 \# \cdots \# u_m \#, \# v_1 \# \cdots \# v_m \# \rangle$ such that: $u_1$ is an initial configuration of $M$; $v_m$ is an accepting configuration; for $1 \leq k < m$, $u_k$ is a valid configuration and $v_{k+1}$ is a valid successor of $u_k$.
$L_M^2$ is simply the set of 2-words whose first and second component are equal.
It is not difficult to see that $L_M^1 \cap L_M^2 = L_{M^2}$ and both $L_M^1$ and $L_M^2$ are rational (and deterministic). 
An automaton for $L_M^2$ works synchronously by alternately reading and comparing one character from each tape, generalizing the automaton in Figure~\ref{fig:Aeq}.
An automaton for $L_M^1$ starts with the second head moving forward to $v_2$; it then compares each $u_k$ and $v_{k+1}$ one character at a time, checking that they are consistent with $M$'s $\delta$.

We can finally prove the theorem by contradiction: assume the rational intersection problem is semidecidable.
Then, the following is a semi-decision procedure for the problem of determining whether $\valc{M}$ is regular.
Construct the automata for $L_M^1$ and $L_M^2$.
If $L_M^1 \cap L_M^2 = L_{M^2}$ is rational, then the semi-decision procedure for rational intersection halts with positive outcome; then we conclude that $\valc{M}$ is regular; otherwise loop forever.
Since regularity of $\valc{M}$ is not semidecidable, we have a contradiction.
\qedhere
\end{proof}

\begin{remark} \label{rem:undec-degree}
The rational intersection problem belongs to $\Sigma_2^0$ in the arithmetical hierarchy.
Consider an enumeration $C_1, C_2, \ldots$ of multi-tape automata.
The rational intersection problem for $A$ and $B$ is expressible as: $\exists z \forall x: C_z(x) \liff A(x) \land B(x)$, where $C_k(x)$ is a predicate that holds iff automaton $C_k$ accepts input $x$.
The formula $P(x, z) = C_z(x) \liff A(x) \land B(x)$ within quantifier scope is recursive (just simulate the automata runs); hence $\forall x: P(x, z)$ is $\Pi_1^0$ and $\exists z \forall x: P(x, z)$ is $\Sigma_2^0$.
\end{remark}

\fakepar{Rational nondeterministic complement is undecidable.}
Recall that deterministic automata are closed under complement and nondeterministic automata are closed under union.
Therefore, the undecidability Theorem~\ref{th:not-semidecidable} carries over to the rational complement problem (defined as obvious): the languages $L_M^1$ and $L_M^2$ used in the proof of Theorem~\ref{th:not-semidecidable} are deterministic; hence their complements $\comp{L_M^1}$ and $\comp{L_M^2}$ are rational languages whose union $\comp{L_M^1} \cup \comp{L_M^2}$ is also rational.
Thus, the complement of $\comp{L_M^1} \cup \comp{L_M^2} = \comp{L_M^1 \cap L_M^2}$ is rational iff $\valc{M}$ is regular.
\begin{corollary}
The rational complement problem (determining whether the complement of a rational nondeterministic language is rational) is not semidecidable.
\end{corollary}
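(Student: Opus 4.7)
The plan is to reduce from the rational intersection problem already shown non-semidecidable in Theorem~\ref{th:not-semidecidable}, exploiting the fact that the two languages $L_M^1$ and $L_M^2$ used in its proof are deterministic and hence their complements are rational.

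First I would observe that, given any pair of deterministic asynchronous automata $A, B$ (sharing tapes), closure of deterministic automata under complement yields rational $\comp{A}$ and $\comp{B}$; then closure of nondeterministic asynchronous automata under union yields a rational nondeterministic automaton $C$ with $\lang{C} = \lang{\comp{A}} \cup \lang{\comp{B}}$. By De Morgan, $\lang{\comp{C}} = \lang{A} \cap \lang{B}$. Therefore the complement of the rational nondeterministic language $\lang{C}$ is rational if and only if the intersection $\lang{A} \cap \lang{B}$ is rational. This gives a many-one reduction from rational intersection (restricted to deterministic inputs) to rational complement (of nondeterministic inputs).

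Next I would appeal to the proof of Theorem~\ref{th:not-semidecidable}: the automata constructed there for $L_M^1$ and $L_M^2$ are explicitly deterministic, so the undecidability argument actually establishes that rational intersection is not semidecidable even when restricted to pairs of deterministic asynchronous automata. Composing this with the reduction above, a hypothetical semi-decision procedure for rational complement of nondeterministic automata could be used, on input $(A,B)$ with $A,B$ deterministic, by first building $C = \comp{A} \cup \comp{B}$ and then invoking it on $C$; a positive answer would certify rationality of $\lang{A} \cap \lang{B}$, yielding a semi-decision procedure for rational intersection, and in turn (as in the proof of Theorem~\ref{th:not-semidecidable} applied to $A = L_M^1$, $B = L_M^2$) for regularity of $\valc{M}$, contradicting~\cite{Hart67}.

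There is no real obstacle beyond checking that the constructions chain correctly: the delicate point is simply that the reduction must start from deterministic automata so that their complements are guaranteed to be rational, and this is already granted by the determinism of $L_M^1$ and $L_M^2$ in the proof of Theorem~\ref{th:not-semidecidable}. The corollary then follows immediately without revisiting the Turing-machine encoding.
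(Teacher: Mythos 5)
Your proof is correct and takes essentially the same route as the paper's: both exploit the determinism of $L_M^1$ and $L_M^2$ from Theorem~\ref{th:not-semidecidable}, complement them (closure of deterministic automata under complement), form the nondeterministic union, and observe via De Morgan that the complement of $\comp{L_M^1} \cup \comp{L_M^2}$ is exactly $L_M^1 \cap L_M^2$, whose rationality is equivalent to regularity of $\valc{M}$. The only difference is presentational: you package the argument as a general many-one reduction from rational intersection on deterministic inputs before instantiating it, whereas the paper instantiates directly on $L_M^1$ and $L_M^2$.
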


\fakepar{Closure with respect to equality implies synchrony.}
The proof of Theorem~\ref{th:not-semidecidable} also reveals that even the intersection of an asynchronous automaton with a synchronous one is in general not rational.
A natural question is then whether there exist interesting combinations of synchronous and asynchronous automata whose intersection is rational.
A particularly significant case is equality of tapes: it is a relation clearly recognized by synchronous automata, and it plays an important role in the combination of decision procedures (e.g., \`a la Nelson-Oppen~\cite{NelsonO79} or following the DPLL($T$) paradigm~\cite{NieuwenhuisOT06}).
Unfortunately, it is a corollary of standard results that the only ``natural'' and robust class of rational languages are those accepted by synchronous automata.

\begin{corollary} \label{th:asynch-and-equality}
Consider an $n$-tape automaton $A$; if the language $\lang{A} \cap \{ x^n \mid x \in \Sigma^*\}$ is rational, then it is also accepted by a synchronous automaton.
\end{corollary}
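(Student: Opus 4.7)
The plan is to reduce the claim to two classical ingredients already invoked in the paper: the projection theorem for rational languages, and a standard round-robin encoding of a single-tape DFA as a synchronous multi-tape automaton.

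First, I would unpack the shape of the intersection. Set
\[
L \;=\; \{\, x \in \Sigma^* \mid \langle x,\ldots,x\rangle \in \lang{A} \,\},
\]
so that, tautologically,
\[
\lang{A} \cap \{\, x^n \mid x \in \Sigma^* \,\} \;=\; \{\, \langle x,\ldots,x\rangle \mid x \in L \,\}.
\]
Denote the right-hand side by $L^{\mathrm{eq}}$; by hypothesis $L^{\mathrm{eq}}$ is rational, say $L^{\mathrm{eq}} = \lang{B}$ for some $n$-tape automaton $B$.

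Second, I would apply the projection result already used in the proof of Theorem~\ref{th:not-semidecidable}: the single-component projection of any rational language is regular~\cite{RabinScott59}. Projecting $L^{\mathrm{eq}}$ onto the first tape returns exactly $L$, so $L \subseteq \Sigma^*$ is regular, and we may fix a DFA $D = \langle \Sigma, P, \delta_D, p_0, F_D \rangle$ accepting it.

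Third, I would construct a synchronous $n$-tape automaton $A'$ that recognizes $L^{\mathrm{eq}}$ by simulating $D$ round-robin across the $n$ tapes. Take state set $P \times \{1,\ldots,n\}$ with tape assignment $\tau((p,i)) = t_i$ and initial state $(p_0, 1)$; from $(p, i)$ on input $\sigma \in \Sigma$, $A'$ moves to $(p, i{+}1)$ if $i < n$ and to $(\delta_D(p,\sigma), 1)$ if $i = n$. End-markers are handled in the same rotation, with $\emk$ being consumable from $(p, i)$ only if, starting from $i$, it appears at the current position on every remaining tape $t_i, t_{i+1}, \ldots, t_n$; $A'$ enters its unique accepting state at the end of that terminating round exactly when $p \in F_D$. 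Thus $A'$ reads one symbol from each tape per round, so any two active heads are at most $n-1$ positions apart and $A'$ is synchronous (with $s = n-1$); and by construction its language is precisely $L^{\mathrm{eq}}$.

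No step presents a real obstacle: the projection ingredient is classical and the round-robin construction is elementary. The only care required is that the terminating round forces $\emk$ on all $n$ tapes simultaneously, so that the inputs accepted truly have equal length on every tape and no ``stray'' longer tape can sneak into $\lang{A'}$.
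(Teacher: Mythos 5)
Your overall strategy---project onto one component to get a regular language $L$ (via the Rabin--Scott projection fact already used in Theorem~\ref{th:not-semidecidable}), then rebuild $L^{\mathrm{eq}}$ with an explicitly synchronous automaton---is sound and genuinely different from the paper's proof, which simply observes that $\lang{A} \cap \{x^n \mid x \in \Sigma^*\}$ is \emph{length-preserving} and invokes Theorem~6.1 of~\cite[Chap.~IX]{eilenberg-book} (length-preserving rational languages are synchronous). However, your round-robin construction as written has a genuine flaw: it never compares the symbols read on the different tapes. From $(p,i)$ with $i<n$ you move to $(p,i{+}1)$ on \emph{any} $\sigma \in \Sigma$, and the DFA step at $i=n$ uses only the symbol on tape $t_n$; nothing forces the $n$ symbols of a round to agree. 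Consequently $\lang{A'}$ is (roughly) the set of equal-length tuples whose last component lies in $L$, which strictly contains $L^{\mathrm{eq}}$ whenever $n \geq 2$, $|\Sigma| \geq 2$, and $L \neq \emptyset$: for instance with $n=2$ and $b \in L$, your $A'$ accepts $\langle a, b\rangle$---state $(p_0,1)$ reads $a$, moves to $(p_0,2)$, reads $b$, moves to $(\delta_D(p_0,b),1)$---even though $\langle a,b\rangle \notin L^{\mathrm{eq}}$. So the claim ``by construction its language is precisely $L^{\mathrm{eq}}$'' fails; the difficulty you flagged (end-marker alignment, equal lengths) is the easy part, while the equality of \emph{contents} is the part your construction silently drops.

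The repair is standard and cheap: enrich the state with the symbol opening the current round, i.e., use states $P \times \{1,\ldots,n\} \times (\Sigma \cup \{-\})$, where from $(p,1,-)$ reading $\sigma$ you go to $(p,2,\sigma)$, from $(p,i,\sigma)$ with $1 < i < n$ the only transitions are on $\sigma$ itself (to $(p,i{+}1,\sigma)$), and from $(p,n,\sigma)$ on $\sigma$ you go to $(\delta_D(p,\sigma),1,-)$; the terminating round starts only at $i=1$ with a chain of $\emk$-transitions through all $n$ tapes, accepting iff $p \in F_D$. This is exactly the $n$-tape generalization of the equality automaton $\aut{fig:Aeq}$ run in lockstep with $D$ along the diagonal, it is synchronous for $s = n-1$, and it accepts precisely $L^{\mathrm{eq}}$. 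With that fix your argument goes through and is arguably more self-contained than the paper's, trading the appeal to Eilenberg's structural theorem for an elementary explicit construction, at the cost of needing the projection lemma.
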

\begin{proof}
Language $L = \lang{A} \cap \{ x^n \mid x \in \Sigma^*\}$ is so-called ``length-preserving'': in any word in $L$, all components have the same length.
Theorem~6.1 in~\cite[Chap.~IX]{eilenberg-book} shows that length-preserving rational languages are synchronous.
\qedhere
\end{proof}

\section{Multi-Tape Automata: Positive Results} \label{sec:positive-results}
The undecidability of whether an intersection is rational does not prevent the definition of \emph{approximate} algorithms for intersection.
Section~\ref{sec:algor-inters} describes one such algorithm that bounds the maximum delay between corresponding heads of the intersecting automata; the algorithm under-approximates  the real intersection.
We also discuss very simple syntactic conditions under which a bound of zero delay still yields a complete intersection.
Section~\ref{sec:approx-complement} discusses to what extent some of these results can be extended to the approximation of complement for nondeterministic automata.

\subsection{An Algorithm for the Under-Approximation of Intersection} \label{sec:algor-inters}
This section outlines an algorithm $\algo{A, B, d}$ that inputs two multi-tape automata $A$ and $B$ and a delay bound $d \in \naturals \cup \{\infty\}$ and returns a multi-tape automaton $C$ that approximates the intersection $A \cap B$ to within delay $d$.
The intersection construction extends the classic ``cross-product'' construction: simulate the parallel runs of the two composing automata by keeping track of what happens in each component.

\fakepar{Informal overview.}
Let us introduce the algorithm's basics through examples.
Consider the intersection of $\aut{fig:Aeq}$ and $\aut{fig:Acat}$ in Figures~\ref{fig:Aeq} and \ref{fig:Acat}; the initial state is labeled $\langle \mathop{=}_1, \mathop{\cat}_1 \rangle$ to denote that it combines states $\mathop{=}_1$ (i.e., state $1$ in $\aut{fig:Aeq}$) and $\mathop{\cat}_1$ (state $1$ in $\aut{fig:Acat}$).
As the intersection develops, the composing automata synchronize on transitions on shared tapes and proceed asynchronously on non-shared tapes.
In the example, there is a synchronized transition from $\langle \mathop{=}_1, \mathop{\cat}_1 \rangle$ to $\langle \mathop{=}_2, \mathop{\cat}_2 \rangle$ upon reading $a$ on shared tape $X$, and an asynchronous transition from the latter state to $\langle \mathop{=}_2, \mathop{\cat}_1 \rangle$ upon reading $a$ on $\aut{fig:Acat}$'s non-shared tape $Z$.
$\aut{fig:Aeq}$ in state $\mathop{=}_2$ can also read $a$ on shared tape $Y$; this is a valid move in the intersection even if $\aut{fig:Acat}$ cannot read on tape $Y$ until it reaches state $\mathop{\cat}_4$. Since reading can proceed on other tapes, we just have to ``delay'' the transition that reads $a$ on $Y$ to a later point in the computation and store this delay using the states of the intersection automaton; $\aut{fig:Acat}$ will then be able to take other transitions and will consume the delayed ones asynchronously \emph{before} taking any other transition on $Y$ (that is, delays behave as a FIFO queue).
For example, when a run of the intersection automaton reaches state $\langle \mathop{=}_4, \mathop{\cat}_4 \rangle$, $\aut{fig:Acat}$ can read $a$ on $Y$ matching $\aut{fig:Aeq}$'s delayed transition (which is then consumed).
Here is a picture showing these steps:
 
\begin{tikzpicture}[->,semithick, accepting/.style=accepting by double,initial text=,
  every state/.style={draw,minimum size=12mm,inner sep=1pt},align=center,
  shorten >=2pt, shorten <=2pt, node distance=2mm and 5mm, font=\scriptsize]

\node [state,initial,initial where=above] (s11x) {$X$ \\ $\mathop{=}_1, \mathop{\cat}_1$ \\ $[\ ], [\ ]$};
\node [state] (s22z) [left=of s11x] {$Z$ \\ $\mathop{=}_2, \mathop{\cat}_2$ \\ $[\ ], [\ ]$};
\node [state] (s21x) [left=of s22z] {$\ $ \\ $\mathop{=}_2, \mathop{\cat}_1$ \\ $[\ ], [\ ]$};
\node [state] (s12z) [right=of s11x] {$Z$ \\ $\mathop{=}_1, \mathop{\cat}_2$ \\ $[a], [\ ]$};
\node [state] (s11xd) [right=of s12z] {$X$ \\ $\mathop{=}_1, \mathop{\cat}_1$ \\ $[a], [\ ]$};
\node [state] (s2) [right=of s11xd] {$Y$ \\ $\mathop{=}_4, \mathop{\cat}_4$ \\ $[a], [\ ]$};
\node [state] (s3) [right=of s2] {$X$ \\ $\mathop{=}_4, \mathop{\cat}_5$ \\ $[\ ], [\ ]$};

\path (s11x) edge  node [above] {$a$} (s22z);
\path (s11x) edge node [above] {$a$} (s12z);
\path (s12z) edge node [above] {$a$} (s11xd);
\path (s22z) edge node [above] {$a$} (s21x);

\path (s11xd) edge node [above] {$\emk$} (s2);
\path (s2) edge node [above] {$a$} (s3);
\end{tikzpicture}

Delays may become unbounded in some cases.
In the example, automaton $\aut{fig:Aeq}$ may accumulate arbitrary delays on tape $Y$ while in states $\mathop{=}_1, \mathop{=}_2, \mathop{=}_3$; this corresponds to the intersection automaton ``remembering'' an arbitrary word on tape $Y$ to compare it against $Z$'s content later.
An unbounded delay is necessary in this case, as the computations on $\aut{fig:Aeq}$ and $\aut{fig:Acat}$ manage the heads on $X$ and $Y$ in irreconcilable ways: the intersection language of $\aut{fig:Aeq}$ and $\aut{fig:Acat}$ is not rational.

\fakepar{The algorithm.}
Consider two automata $A = \langle \Sigma, Q^A, \delta^A, Q^A_0, F^A, T^A, \tau^A \rangle$ and $B = \langle \Sigma, Q^B, \delta^B, Q^B_0, F^B, T^B, \tau^B \rangle$, such that $A$ has $m$ tapes $T^A = \{t_1^A, \ldots,  t_m^A\}$ and $B$ has $n$ tapes $T^B = \{t_1^B, \ldots, t_n^B\}$.
We present an algorithm $\algo{A, B, d}$ that constructs an automaton $C = \langle \Sigma, Q, \delta, Q_0, F, T, \tau \rangle$---with $C$'s tapes $T = T^A \cup T^B$---such that $\lang{C} \subseteq \lang{A} \cap \lang{B}$.
We describe the algorithm as the combination of fundamental operations, introduced as separate routines.
All components of the algorithm have access to the definitions of $A$ and $B$, to the definition of $C$ being built, and to a global stack \lstinline|s| where new states of the composition are pushed (when created) and popped (when processed).
The complete pseudo-code of the routines is \seeAppendix{app:under-appr-inters}.

Routine \lstinline|async_next| (lines~1--17 in Figure~\ref{fig:asyncNext}) takes a $t$-tape automaton $D$ (i.e., $A$ or $B$) and one of its states $q$, and returns a set of tuples $\langle q', h_1, \ldots, h_t \rangle$ of all next states reachable from $q$ by accumulating delayed transitions $h_i \in (\delta^D)^*$ in tape $t_i$, for $1 \leq i \leq t$.
We call \emph{delayed states} such tuples of states with delayed transitions.
The search for states reachable from $q$ stops at the first occurrences of states associated with a certain tape.
For example, \lstinline|async_next($\aut{fig:Acat}$, $\mathop{\cat}_1$)| consists of $\langle \mathop{\cat}_1, \epsilon, \epsilon, \epsilon \rangle$, $\langle \mathop{\cat}_2, (\mathop{\cat}_1, a, \mathop{\cat}_2), \epsilon, \epsilon \rangle$, $\langle \mathop{\cat}_3, (\mathop{\cat}_1, b, \mathop{\cat}_3), \epsilon, \epsilon \rangle$,  $\langle \mathop{\cat}_4, (\mathop{\cat}_1, \emk, \mathop{\cat}_4), \epsilon, \epsilon \rangle$, $\langle \mathop{\cat}_5, (\mathop{\cat}_1, \emk, \mathop{\cat}_4), (\mathop{\cat}_4, a, \linebreak \mathop{\cat}_5), \epsilon \rangle$, $\langle \mathop{\cat}_6, (\mathop{\cat}_1, \emk, \mathop{\cat}_4), (\mathop{\cat}_4, b, \mathop{\cat}_6), \epsilon \rangle$, and $\langle \mathop{\cat}_7, (\mathop{\cat}_1, \emk, \mathop{\cat}_4), (\mathop{\cat}_4, \emk, \mathop{\cat}_7), \epsilon \rangle$.

Consider now a pair of delayed states $\langle p, h_1, \ldots, h_m \rangle$ and $\langle q, k_1, \ldots, k_n \rangle$, respectively of $A$ and $B$.
The two states can be composed only if the delays on the synchronized tapes are pairwise \emph{consistent}, that is the sequence of input symbols of one is a prefix (proper or not) of the other's; otherwise, the intersection will not be able to consume the delays in the two components because they do not match.
\lstinline|cons($h_i, k_i$)| denotes that the sequences $h_i, k_i$ of delayed transitions are consistent.
Routine \lstinline|new_states| (lines~19--26 in Figure~\ref{fig:newStates}) takes two sets $P, Q$ of delayed states and returns all consistent states obtained by composing them.
\lstinline|new_states| also pushes onto the stack \lstinline|s| all composite states that have not already been added to the composition.
For convenience, \lstinline|new_state| also embeds the tape $t$ of each new composite state within the state itself.
(All tapes are considered: states corresponding to inconsistent choices will be dead ends.)

To add arbitrary prefixes to the delays of delayed states generated by \lstinline|new_states|, routine \lstinline|compose_transition| (lines~28--33 in Figure~\ref{fig:composeTransition}) takes two sets $P, Q$ of delayed states and an $(m+n)$-tuple of delays, and calls \lstinline|new_states| on the modified states obtained by orderly adding the delays to the states in $P$ and $Q$.
It also adds all transitions reaching the newly generated states to $C$'s transition function $\delta$.

We are ready to describe the main routine \lstinline|intersect| which builds $C$ from $A$ and $B$; see Figure~\ref{fig:intersect} for the pseudo-code (some symmetric cases are omitted for brevity).
Routine \lstinline|intersect| takes as arguments a bound on the maximum number of states and on the maximum delay \lstinline|max_delay| (measured in number of transitions) accumulated in the states.
After building the initial states of the compound (lines 4--5), \lstinline|intersect| enters a loop until either no more states are generated (i.e., the stack \lstinline|s| is empty) or it has reached the bound \lstinline|max_states| on the number of states.
Each iteration of the loop begins by popping a state $r$ from the top of the stack (line 7).
$r$ is normally added to the set $Q$ of $C$'s states, unless some of its sequences of delayed transitions are longer than the delay bound \lstinline|max_delay|; in this case, the algorithm discards $r$ and proceeds to the next iteration of the loop (line 8).
If $r$ is not discarded, \lstinline|intersect| builds all composite states reachable from $r$.
These depend on the tape $t$ read when in $r$: if it is shared between $A$ and $B$ we have synchronized transitions (lines 10--30), otherwise we have an asynchronous transition of $A$ (lines 32--41) or one of $B$ (line 43).

Consider the case of a synchronized transition on some shared tape $t \in T^A  \cap T^B$.
While both $A$ and $B$ must read the same symbol on the same tape, they may do so by consuming some transition that has been delayed.
For example, if $A$ has a non-empty delay $h_t \neq \epsilon$ for tape $t$, it will consume the first transition $(u_a, \sigma, u_a')$ in $h_t$; since the transition is delayed, $A$'s next state in the compound is not determined by the delayed transition (which only reads the input $\sigma$ at a delayed instant) but by $A$'s current state $q_a$ in the compound (line 12 and line 17).
The reached states are the composition of those reached within $A$ and $B$, with the delays updated so as to remove the delayed transitions consumed.
For example, lines 12--14 correspond to both $A$ and $B$ taking a delayed transition, whereas lines 17--20 correspond to $A$ taking a delayed transition and $B$ taking a ``normal'' transition determined by its transition function $\delta^B$ on symbol $\sigma$.
If neither $A$ nor $B$ have delayed transitions for tape $t$, they can only perform normal transitions according to their transitions functions, without consuming the delays stored in the state; this is shown in lines 26--30.

The final portion of \lstinline|intersect| (from line 32) handles the case of transitions on some non-shared tape $t$.
In these cases, the component of the state $r$ corresponding to the automaton that does not have tape $t$ does not change at all, whereas the other component is updated as usual---either by taking a delayed transition (lines 33--35) or by following its transition function (lines 37--41).

The output of $\algo{A, B, d}$ coincides with the main routine \mbox{\lstinline|intersect|} called on $A$ and $B$ with no bound on the number of states and \lstinline|max_delay = $d$|; the final states $F$ in $C$ coincide with those whose components are both final in $A$ and $B$ and have no delayed transitions.

\subsection{Correctness and Completeness} \label{sec:correct-precise}
In the proofs of this section, we make the simplifying assumption that all tapes are shared: $T^A = T^B = T$; handling non-shared tapes is straightforward.
Let us show that $\algo{A, B, d}$ is correct, that is it constructs an under-approximation of the intersection.

\begin{theorem}[Correctness] \label{th:correct}
For every finite delay $d \in \naturals$, $\algo{A, B, d}$ returns a $C$ such that $\lang{C} \subseteq \lang{A} \cap \lang{B}$.
\end{theorem}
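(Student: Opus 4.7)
The plan is to argue by simulation: given any accepting run $\rho$ of $C$ on some input $x = \langle x_1, \ldots, x_n\rangle$, I would exhibit accepting runs of $A$ and of $B$ on the same input $x$ (under the simplifying assumption $T^A = T^B$). Doing so for every accepted word establishes $\lang{C} \subseteq \lang{A} \cap \lang{B}$.

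First I would define how an accepting $C$-run induces $A$- and $B$-runs. Fix $\rho = r_0 r_1 \cdots r_\ell$ with $r_i = \langle q_a^i, q_b^i, \vec h^i, \vec k^i\rangle$. Each $C$-transition $r_i \to r_{i+1}$ is built by one of the branches of \lstinline|intersect|; I would classify its effect on $A$ as: (a)~popping the head transition from a delay queue $h_j^i$ and updating $\vec h$ accordingly; (b)~firing a fresh $\delta^A$-transition on $\sigma$ from the then-current $A$-state, possibly followed by a chain of further transitions pushed into $\vec h^{i+1}$ by \lstinline|async_next|; or (c)~doing nothing, if $C$'s current tape does not belong to $T^A$. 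I would then collect all the transitions that $A$ ever commits to along $\rho$---both popped ones and newly pushed ones---in the order of commitment, yielding a sequence $\pi_A$; the sequence $\pi_B$ is defined symmetrically.

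The heart of the proof is an invariant maintained inductively along $\rho$: after $i$ steps, (i)~$\pi_A$ is a valid partial run of $A$ from an initial state ending in $q_a^i$; (ii)~the transitions currently stored in $\vec h^i$ are precisely the most recently appended entries of $\pi_A$ that $C$ has committed to but not yet popped; and (iii)~for each tape $t_j$, the sequence of symbols read by $\pi_A$ on $t_j$ coincides with the prefix of $x_j\emk$ that $C$ has scanned on tape $t_j$ within its first $i$ steps. A symmetric invariant holds for $B$. The base case ($i=0$) is trivial, and the inductive step breaks into the branches of \lstinline|intersect|. The two crucial facts that make each branch go through are that \lstinline|async_next| by definition produces a valid forward walk in $A$'s transition graph (so the chain pushed to the queues is a legitimate continuation of $\pi_A$) and that the consistency predicate enforced by \lstinline|new_states| on each pair $h_j^i, k_j^i$ ensures that the symbols $A$ and $B$ record for tape $t_j$ agree, so the common input $x_j$ is indeed being read by both.

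At $i = \ell$ the accepting $C$-state forces $q_a^\ell \in F^A$, $q_b^\ell \in F^B$, and all queues empty; the invariant then gives that $\pi_A$ is an accepting run of $A$ that has read the full $x_j\emk$ on each tape $t_j$ (since $C$ must have consumed $\emk$ on every tape in order to accept), and symmetrically for $\pi_B$, so $x \in \lang{A} \cap \lang{B}$. The main obstacle I anticipate is formalizing clauses~(ii) and~(iii) precisely enough that the case analysis in the inductive step stays clean: the FIFO discipline of the delay queues and the exact correspondence between symbols consumed by $C$ on shared tapes and those recorded by $\pi_A$ and $\pi_B$ is where most of the technical bookkeeping will live. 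Note that the delay bound $d$ plays no role in correctness---discarding states whose queues exceed $d$ only shrinks $\lang{C}$, so the inclusion is preserved for every finite $d$.
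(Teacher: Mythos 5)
Your proposal is correct and takes essentially the same route as the paper: the paper's proof likewise extracts accepting runs of $A$ and $B$ from an accepting run $\rho$ of $C$, phrased as two permutations $\pi^A, \pi^B$ of $\rho$ that move each delayed transition back to the point where it was committed, and it hinges on exactly your two key facts---consistency of delays on shared tapes plus the requirement that accepting states of $C$ carry empty delay queues, so all delayed transitions are consumed before acceptance. Your explicit inductive invariant is just a fleshed-out version of what the paper asserts in one line, with the minor caveat (which you anticipate yourself) that clause~(iii) should read that the symbols $\pi_A$ reads on tape $t_j$ equal the prefix of $x_j\emk$ scanned by $C$ \emph{extended by} the symbols pending in the queue $h_j^i$, not that prefix alone.
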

\begin{proof}
Let us show that $x \in \lang{C}$ implies $x \in \lang{A} \cap \lang{B}$.
The basic idea is that, given an accepting run $\rho = \rho_0 \rho_1 \cdots \rho_n$ of $C$ on $x$, one can construct two permutations $\pi^A, \pi^B$ such that $\pi^A(\rho)$ is an accepting run of $A$ and $\pi^B(\rho)$ is an accepting run of $B$ on $x$.
The permutation $\pi^A$ is constructed as follows (constructing $\pi^B$ works in the same way): each element $\rho_k$ in $\rho$, for $0 \leq k < n$, corresponds to either a synchronous or a delayed transition of $A$; in the former case, $\pi^A$ does not change the position of $\rho_k$, otherwise it moves it to where the transition was delayed (i.e., consumed asynchronously).
For accepting runs, it is always possible to construct such permutations, since accepting states in $C$ have no delays, and hence delayed transitions must have been consumed somewhere \emph{before} reaching the accepting state.
\qedhere
\end{proof}

\begin{remark}[Termination] \label{rem:stopping}
$\algo{A, B, d}$ only expands states encoding a maximum delay of $d$, and terminates after the given maximum number of states have been generated or when all states have been explored---whatever comes first.
Upon termination, in general we do not know if the generated intersection automaton accepts $\lang{C}$ or only a subset of it---consistently with the undecidability degree of deciding the intersection (Remark~\ref{rem:undec-degree}).
\end{remark}

\begin{remark}[Complexity] \label{rem:complexity}
Since $\algo{A, B, d}$ may have to enumerate all $d$-delayed states, its worst-case time complexity is exponential in $d$ and $|T|$---which determine the combinatorial explosion in the number of compound states---as we now illustrate.

To get an upper bound on the time complexity of $\algo{A, B, d}$, let $q = \max(|Q^A|, \linebreak |Q^B|)$, $\delta = \max(|\delta^A|, |\delta^B|) = \Orm(q^2)$, and $t = |T^A \cap T^B| = |T^A| = |T^B|$.
Consider the case in which \lstinline|intersect| expands \emph{all} $d$-delay compound states determined by $A$ and $B$, and ignore constant multiplicative factors.
The main loops executes once per compound state, that is $q^2 \delta^{2 d t}$ times.
Each iteration: (1) calls \lstinline|async_next| on $A$ and $B$, taking time $t q^{3}$ using an algorithm such as Floyd-Warshall for the all-pairs shortest path (but whose results can be cached); and (2) composes the sets of states by calling \lstinline|compose_transition|, taking time $d t q^4$ assuming states in $Q$ are hashed.
The dominant time-complexity factor is then $d t q^6 q^{4dt}$, exponential in $d$ and $t$.
\end{remark}

There is a simple condition for $\algo{A, B, d}$ to return complete intersections.
If $A$ and $B$ share only one tape, $\algo{A, B, 0}$ returns a $C$ that reads the input on non-shared tapes asynchronously whenever possible; otherwise, $C$ reads synchronously the input on the single shared tape without need to accumulate delays.
\begin{lemma}[Sufficient condition for completeness] \label{lem:completeness}
If $A$ and $B$ share at most one tape, then $\algo{A, B, 0}$ returns a  $C$ such that $\lang{C} = \lang{A} \cap \lang{B}$.
\end{lemma}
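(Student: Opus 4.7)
The containment $\lang{C} \subseteq \lang{A} \cap \lang{B}$ is immediate from Theorem~\ref{th:correct} applied with $d = 0$. The work lies in proving the reverse inclusion: given any $x \in \lang{A} \cap \lang{B}$ with accepting runs $\rho^A$ of $A$ and $\rho^B$ of $B$ on the appropriate projections of $x$, I will exhibit an accepting run of $C$ on $x$. Since $d = 0$, every state generated by $\algo{A, B, 0}$ carries empty delay queues, so the composite run must interleave transitions of $\rho^A$ and $\rho^B$ without ever postponing a shared-tape transition.

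\textbf{Zero-shared-tape subcase.} If $T^A \cap T^B = \emptyset$, every transition of $\rho^A$ and $\rho^B$ is on a non-shared tape and therefore becomes, in $C$, an asynchronous transition that updates only the $A$-component or only the $B$-component of the composite state. Any interleaving of $\rho^A$ and $\rho^B$ (e.g.\ concatenation) is then a valid run of $C$ and ends in the pair of accepting states of $A$ and $B$ with empty delays, which is accepting in $C$ by construction of $F$.

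\textbf{One-shared-tape subcase.} Let $t^\star$ be the unique shared tape and let $s \in \Sigmaend^*$ be its content (with end-marker). Both $\rho^A$ and $\rho^B$ read the symbols of $s$ in left-to-right order on $t^\star$, so the $k$-th transition of $\rho^A$ on $t^\star$ and the $k$-th transition of $\rho^B$ on $t^\star$ carry the same input symbol. I build $\rho$ greedily: maintaining a pointer into each of $\rho^A$ and $\rho^B$, at each step I (i) take a non-shared transition of whichever side is currently poised to execute one, emitting the corresponding asynchronous transition of $C$; and (ii) only when both sides are simultaneously poised to read $t^\star$ do I emit the synchronized $C$-transition, which is well-defined because both sides read the same symbol. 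The branches of \lstinline|intersect| in Figure~\ref{fig:intersect} for the zero-delay case (lines 26--30 for synchronized and lines 37--41 for asynchronous transitions) create exactly the composite states and transitions used by this construction.

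\textbf{Main obstacle.} The crux is showing the greedy interleaving never deadlocks and reaches a composite accepting state. I would maintain the invariant that at every step both pointers have consumed the same number of shared-tape transitions of their respective runs: this holds initially, is preserved by unilateral non-shared advances (which do not touch the shared-counter) and by bilateral synchronized advances. From the invariant it follows that whenever one run is exhausted the other has no pending shared-tape transition, hence all its remaining transitions can be taken asynchronously; and whenever one side is poised on $t^\star$ and the other is not yet done, the other side must either be poised on $t^\star$ too (so a synchronized step fires) or have pending non-shared transitions (so the greedy step fires). Iterating, the procedure terminates at the pair of accepting states of $A$ and $B$ with empty delays, which is an accepting state of $C$; therefore $x \in \lang{C}$. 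The argument breaks as soon as more than one tape is shared, because two shared tapes can be read in orders that force a nonzero delay — which is precisely why the hypothesis is stated for at most one shared tape.
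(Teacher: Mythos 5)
The paper never actually proves this lemma: it appears with only the two-sentence informal justification that precedes it (interleave reads on non-shared tapes ad lib, synchronize on the single shared tape without accumulating delay), and your greedy interleaving with the equal-shared-count invariant is a faithful formalization of exactly that idea. Your zero-shared-tape subcase, and the matching of your schedule to the zero-delay branches of \lstinline|intersect|, are fine; note only that at $d=0$ the surviving elements of \lstinline|async_next| are precisely the trivial $\langle q, \epsilon, \ldots, \epsilon \rangle$ tuples, which is what makes your step-by-step simulation line up with the algorithm's states.

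There is, however, one step that does not follow from your invariant: the claim that ``whenever one run is exhausted the other has no pending shared-tape transition.'' The invariant guarantees equal numbers of \emph{consumed} shared-tape transitions, but says nothing about the \emph{total} numbers of shared-tape transitions in $\rho^A$ and $\rho^B$, and under the paper's literal acceptance definition these totals can differ, because a run is accepting as soon as it reaches a final state, without having to consume its tapes. Concretely, take $A$ and $B$ over the single (shared) tape $X$, with $A$ accepting after reading one $a$ (so $\lang{A} = a\Sigma^*$) and $B$ accepting after reading $ab$ (so $\lang{B} = ab\Sigma^*$); then $\lang{A} \cap \lang{B} = ab\Sigma^* \neq \emptyset$, yet after the synchronized $a$-transition the $A$-component sits in an accepting state with no outgoing edges (the paper assumes $\delta(q_F, \sigma) = \emptyset$), so the synchronized branch of \lstinline|intersect| produces an empty successor set and $C$ accepts nothing---your interleaving deadlocks exactly where your invariant is silent. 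What rescues the argument (and the lemma) is the implicit assumption, satisfied by every automaton in the paper, that accepting runs read through the end-marker $\emk$ on every tape: then both runs make exactly $|s|+1$ reads on the shared tape, the totals coincide, and your deduction that an exhausted run leaves no pending shared-tape work on the other side becomes valid. You should state this hypothesis and invoke it explicitly at that step; without it the step, and indeed the lemma as literally stated, fails.
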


\begin{example}
Consider the intersection of $A_1 = \aut{fig:Acat}(X,Y,Z)$ and $A_2 = \aut{fig:Aeq}(Z, W)$ (the latter is $\aut{fig:Aeq}$ in Figure~\ref{fig:Acat} with tapes renamed to $Z$ and $W$).
Since $A_1$ and $A_2$ only share tape $Z$, they can be ready to read synchronously on $Z$ whenever necessary without having to delay such transitions, since asynchronous transitions can be interleaved ad lib.
Therefore, bounding the construction to have no delays gives an automaton that accepts precisely the intersection of $A_1$'s and $A_2$'s languages.
\end{example}

\begin{figure}[!ht]
\centering
\begin{subfigure}[b]{0.4\textwidth}
\begin{tikzpicture}[->,semithick, accepting/.style=accepting by double,initial text=,
  every state/.style={draw,minimum size=6mm,inner sep=1pt},
  state/.style=state with output,
  shorten >=2pt, shorten <=2pt, node distance=10mm]
\node [state,initial,initial where=left] (A1) {$X$ \nodepart{lower} $1$};
\node [state] (A2) [right=of A1] {$Y$ \nodepart{lower} $2$};
\node [state,accepting] (A3) [right=of A2] {\nodepart{lower} $3$};

\path (A1) edge [loop above] node [above] {$a$} (A1);
\path (A1) edge node [above] {$\emk$} (A2);
\path (A2) edge [loop above] node [above] {$a$} (A2);
\path (A2) edge node [above] {$\emk$} (A3);
\end{tikzpicture}
\caption{Automaton $\aut{fig:anyXfirst}$.}
\namelabel{fig:anyXfirst}{X}
\end{subfigure}
~
\begin{subfigure}[b]{0.4\textwidth}
\begin{tikzpicture}[->,semithick, accepting/.style=accepting by double,initial text=,
  every state/.style={draw,minimum size=6mm,inner sep=1pt},
  state/.style=state with output,
  shorten >=2pt, shorten <=2pt, node distance=10mm]
\node [state,initial,initial where=left] (A1) {$Y$ \nodepart{lower} $1$};
\node [state] (A2) [right=of A1] {$X$ \nodepart{lower} $2$};
\node [state,accepting] (A3) [right=of A2] {\nodepart{lower} $3$};

\path (A1) edge [loop above] node [above] {$a$} (A1);
\path (A1) edge node [above] {$\emk$} (A2);
\path (A2) edge [loop above] node [above] {$a$} (A2);
\path (A2) edge node [above] {$\emk$} (A3);
\end{tikzpicture}
\caption{Automaton $\aut{fig:anyYfirst}$.}
\namelabel{fig:anyYfirst}{Y}
\end{subfigure}
\caption{Two automata accepting the same language $\{\langle a^m, a^n\rangle \mid m, n \in \naturals\}$.}
\label{fig:anyXandY}
\end{figure}

\begin{remark}
Even when called without bound on the delays, $\algo{A, B, \infty}$ may terminate; in this case, there is not guarantee on the completeness of the returned $C$.
Consider, for example,\footnote{Thanks to the anonymous Reviewer~1 of CSR 2014 for suggesting this example.} automata $\aut{fig:anyXfirst}$ and $\aut{fig:anyYfirst}$ in Figure~\ref{fig:anyXandY}: they both accept the language $L = \{\langle a^m, a^n\rangle \mid m, n \in \naturals\}$ but by reading on the tapes in different order.
$\algo{\aut{fig:anyXfirst}, \aut{fig:anyYfirst}, \infty}$ terminates and returns a $C$ accepting the language 
\[
L_C = \{\langle a^n, \epsilon\rangle, \langle \epsilon, a^n \rangle \mid n \in \naturals\} \,.
\]
Clearly, $L_C \subset \lang{\aut{fig:anyXfirst}} \cap \lang{\aut{fig:anyYfirst}} = L$.
\end{remark}

\subsection{Approximating Complement} \label{sec:approx-complement}
Since deterministic automata are closed under complement, we can use a construction to approximate determinization to build approximate complement automata.
A straightforward under-approximation algorithm for determinization works as follows.
Consider a generic nondeterministic automaton $A$, and let $b$ be a bound on delays; $\widetilde{A}$ is the approximate deterministic version of $A$ which we construct.
Whenever $A$ has a nondeterministic choice between going from state $q$ to states $q_1$ or $q_2$ upon reading some $\sigma$, $\widetilde{A}$ goes to $q_1$ and continues the computation corresponding to that choice for up to $b$ steps; while performing these $b$ steps, $\widetilde{A}$ stores the symbols read in its finite memory.
If the computation terminates with acceptance within $b$ steps, then $\widetilde{A}$ accepts; otherwise, it continues with the computation that chose to go to $q_2$, using the stored finite input for $b$ steps and then continuing as normal.
It is clear that if such an automaton $\widetilde{A}$ accepts, $A$ accepts as well; the converse is in general not true.
Since $\widetilde{A}$ is deterministic, it can be complemented with the usual construction that switches accepting and non-accepting states.

\begin{remark}
Note that, while deterministic automata are closed under complement, the converse is not true: there exist rational languages whose complement is also rational that are strictly nondeterministic.
For example, consider $L = \{ \langle a^x, a^y \rangle \mid x \neq y \text{ or } x \neq 2y \}$.
It is clear that $L$ is rational; it also requires nondeterminism to ``guess'' whether to check $x \neq y$ (pair each $a$ on the first tape with one $a$ on the second tape) or $x \neq 2y$ (pair each $a$ on the first tape with two $a$'s on the second tape).
$L$'s complement $\overline{L}$ is the singleton set with $\langle \epsilon, \epsilon\rangle$, and hence also rational.
\end{remark}

\section{Implementation and Experiments} \label{sec:impl-exper}
To demonstrate the constructions for multi-tape automata in practice, we implemented the algorithm of Section~\ref{sec:algor-inters} in Python with the IGraph library to represent automata transition graphs; the prototype implementation is about 900 lines long, and includes other basic operations on asynchronous automata such as union, complement (for deterministic), and emptiness test.
Using this prototype, we constructed eight composite automata corresponding to language-theoretic examples and simple verification conditions expressible as the composition of rational predicates, and tested them for emptiness. Table~\ref{tab:experiments} lists the results of the experiments; the examples themselves are described below, and all the formulas are listed \seeAppendix{app:impl-exper-sect}.
All the experiments ran on a Ubuntu GNU/Linux box with Intel Quad Core2 CPU at 2.40~GHz, 4~GB of RAM, Python 2.7.3, and IGraph 0.6.
Each experiment consists of two parts: computing the intersection until (possibly bounded) termination (\textsc{intersection}) and testing the emptiness on the simplified intersection (\textsc{emptiness}).
For each part of each experiment, Table~\ref{tab:experiments} reports the time taken to complete it ($t$, in seconds); for the first part, it also shows the number of states $|Q|$ and transitions $|\delta|$ of the generated automaton; the \textsc{emptiness} column also shows the outcome (?: Y~for empty, N for non-empty), which is, of course, checked to be correct.
Note that the prototype is only a proof-of-concept: there is plenty of room for performance improvement.

\begin{table}
\centering
\setlength{\tabcolsep}{8pt}
\begin{tabular}{l| r rr| r c}
  &  \multicolumn{3}{c|}{\textsc{intersection}}  &  \multicolumn{2}{c}{\textsc{emptiness}} \\
  &  \multicolumn{1}{c}{$t$} & $|Q|$ & $|\delta|$  &  \multicolumn{1}{c}{$t$} & ? \\
\hline
$L_{1,2}$  &  0 & 14 & 6 & 0 & N \\
$L_{3,4}$  &  0 & 56 & 48 & 0 & N \\
\hline
\lstinline|tail|: $\cvc_0$ & 0 & 32 & 32 & 0 & Y \\
\lstinline|tail|: $\cvc_1$ & 0 & 248 & 387 & 0 & Y \\
\lstinline|tail|: $\cvc_2$ & 119 & 1907 & 11061 & 4 & Y  \\
\hline
\lstinline|tail|: $\ce_1$ & 0 & 224 & 564 & 0 & N \\
\lstinline|tail|: $\ce_2$ & 222 & 1644 & 21048 & 28 & N  \\
\hline
$\mathsf{cat}_0$ & 2 & 595 & 1009 & 0 & N \\
\end{tabular}
\setlength{\tabcolsep}{6pt}
\caption{Checking languages and verification conditions with multi-tape automata.}
\label{tab:experiments}
\end{table}

\fakepar{Language-theoretic examples.}
Examples $L_{1,2}$ and $L_{3,4}$ (taken from~\cite{KGN04}) 
are 2-word languages whose intersection is finite.
The structure of the automata recognizing the intersected components is such that the algorithm \lstinline|intersect| can only unroll their loops finitely many times, hence terminates without a given bound.
$L_{1,2}$ is the intersection $L_{1,2} = L_1 \cap L_2 = \langle abcabc, abcabca\rangle$ of
$
L_1 =  \left\{\langle ab (cab)^nc, a (bc)^n abca \rangle \mid n \in \naturals\right\}
$ and 
$L_2 =  \{\langle(abc)^n, a(bca)^n \rangle \mid n \in \naturals \}$
.
$L_{3,4}$ is the intersection $L_{3,4} = L_3 \cap L_4 = \langle a b, xyz \rangle$ of
$
L_3 =  \left\{\langle a b^n, x y^nz  \rangle \mid n \in \naturals\right\}
$
and
$
L_4 =  \left\{\langle a^n b, x y^nz \rangle \mid n \in \naturals\right\}
$
.
It is trivial to build the automata for $L_1, L_2, L_3, L_4$; the experiments reported in Table~\ref{tab:experiments} composed them and determined their finite intersection languages, which happens to be complete for $L_{1,2}$ and $L_{3,4}$.

\fakepar{Program verification examples.}
Consider a routine \lstinline|tail| that takes a nonnegative integer $n$ and a sequence $x$ and returns the sequence obtained by dropping the first $n$ elements of $x$ (where \lstinline|rest(x)| returns \lstinline|x| without its first element):

\begin{footnotesize}
\begin{lstlisting}[numbers=none]
  tail (n: $\naturals$, x: SEQUENCE): SEQUENCE is
     if n = 0 or x = $\epsilon$ then Result := x else Result := tail (n-1, rest(x)) end
\end{lstlisting}
\end{footnotesize}

\noindent
If $|y|$ denotes the length of $y$, a (partial) postcondition for \lstinline|tail| is:
\begin{equation}
(n = 0 \land \Result = x) \lor (n > 0 \land |x| \geq n \land  |\Result| = |x| - n)\,.
\label{eq:post}
\end{equation}

\noindent
The bulk of proving \lstinline|tail| against this specification is showing that the postcondition established by the recursive call in the \lstinline|else| branch (assumed by inductive hypothesis) implies the postcondition \eqref{eq:post}.
Discharging this verification condition is equivalent to proving that three simpler implications, denoted $\vc_0$, $\vc_1$, and $\vc_2$, are valid.
For example:
$
\vc_1 \equiv |y| \geq m \land y = \rest(x) \,\limpl\, |x| \geq n \land m = n - 1
$
states that if sequence $\rest(x)$ has length~$\geq n-1$, then the sequence $x$ has length $\geq n$.

We discharged the verification conditions $\vc_0, \vc_1, \vc_2$ using multi-tape automata constructions as follows.
$\vc_k$ is valid if and only if $\cvc_k = \neg \vc_k$ is unsatisfiable. Hence, we have:
\[
\cvc_1 = \neg \vc_1 \ \equiv\ 
|y| \geq m \land y = \rest(x) \land \left( |x| < n \lor m \neq n - 1 \right)\,.
\]

Assume that sequence elements are encoded with a binary alphabet $\{a,b\}$ and elements of the sequence are separated by a symbol $\#$; this is without loss of generality as a binary alphabet can succinctly encode arbitrary sequence elements.

Then, define multi-tape automata that implement the atomic predicates appearing in the formulas; in all cases, these are very simple and small \emph{deterministic} automata.
For example, define 3 automata $\autd{\len}(X, N)$, $\autd{\rest}(X, Y)$, $\autd{\dec}(M, N)$ for $\cvc_1$.
In $\autd{\len}(X, N)$, tape $X$ stores arbitrary sequences encoded as described above, and tape $N$ encodes a nonnegative integer in unary form (as many $a$'s as the integer); $\autd{\len}(X, N)$ accepts on $X$ sequences whose length (i.e., number of $\#$'s) is no smaller than the number encoded on $N$.
$\autd{\rest}(X,Y)$ accepts if the sequence on tape $Y$ equals the sequence on tape $X$ with the first element (until the first $\#$) removed.
$\autd{\dec}(M,N)$ inputs two nonnegative integers encoded in unary on its tapes $M, N$ and accepts iff $M$ has exactly one less $a$ than $N$.

Finally, compose an overall automaton according to the propositional structure of the formula $\cvc_k$ (using intersection, union, and complement as described \seeAppendix{sec:asynchr-autom-theor}) that is equivalent to it, and test if for emptiness.
For example, $\autd{\cvc_1}$ is equivalent to $\cvc_1$:
\begin{equation}
\autd{\cvc_1} \quad\equiv\quad
\left( \autd{\len}(Y,M) \cap \autd{\rest}(X,Y) \right)
  \cap
\left(
  \comp{\autd{\len}(X,N)}
  \cup
  \comp{\autd{\dec}(M, N)}
\right)\,,
\label{eq:cvc1}
\end{equation}
where $\autd{\len}(Y,M)$ denotes an instance of $\autd{\len}$ with tapes $X, N$ renamed to $Y, M$.
In all cases $\cvc_0, \cvc_1, \cvc_2$, the overall automaton is effectively constructible from the basic automata and each intersection shares only one tape; hence constructing intersections with a zero bound on delays is complete (Lemma~\ref{lem:completeness}).
For example, $\autd{\cvc_1}$ build with zero delays is complete, because each element of the disjunction \eqref{eq:post} is treated separately, as every run of the disjunction automaton is either in $\comp{\autd{\len}(X,N)}$ (that only shares $X$) or in $\comp{\autd{\dec}(M,N)}$ (that only shares $M$).

Table~\ref{tab:experiments} shows the results of discharging the verification conditions through this process.
The most complex case is $\cvc_2$ which is the largest formula with 8 variables.
The complete set of verification conditions is shown \seeAppendix{app:impl-exper-sect}.

\paragraph{Failing verification conditions.}
Automata-based validity checking can also detect \emph{invalid} verification conditions by showing concrete counter-examples (assignments of values to variables that make the condition false).
Formulas $\ce_1$ and $\ce_2$ are invalid verification conditions obtained by dropping disjuncts or not complementing them in $\cvc_1$ and $\cvc_2$.
Table~\ref{tab:experiments} shows that the experiments correctly reported non-emptiness.

Even in the cases where the complete intersection is infinite, rational constructions may still be useful to search on-the-fly for accepting states, with the algorithm stopping as soon as it has established that the intersection is not empty.
We did a small experiment in this line with formula $\mathsf{cat}_0$, asserting an incorrect property of sequence concatenation: $x \cat y = z \land \mathit{last}(z) = u \land \mathit{last}(y) = v \limpl u = v$, which does not hold if $y$ is the empty sequence.
Building the intersection with zero delays is not guaranteed to be complete because antecedent and consequent share two variables $u$, $v$; however, it is sufficient to find a counter-example where $y$ is the empty sequence (see Table~\ref{tab:experiments}).

\section{Related Work} \label{sec:related-work}
The study of multi-tape automata began with the classic work of Rabin and Scott~\cite{RabinScott59}.
In the 1960's, Rosenberg and others contributed to the characterization of these automata~\cite{FischerR68,ntape-sync}.
Recent research has targeted a few open issues, such as the properties of synchronous automata~\cite{citeulike:10289197} and the language equivalence problem for deterministic multi-tape automata~\cite{TJ-equivalence-dtn}.
See~\cite{CAF-survey} for a detailed survey of multi-tape automata, and \cite{choff} for a historical perspective. 

Khoussainov and Nerode~\cite{KhoussainovN94} introduced a framework for the presentation of first-order structures based on multi-tape automata; while~\cite{KhoussainovN94} also defines asynchronous automata, all its results target synchronous automata---and so did most of the research in this line (e.g., \cite{BlumensathG00,Rubin08,IbarraT12-CIAA,IbarraT12-TCS}).
To our knowledge, there exist only a few applications that use asynchronous multi-tape automata.
Motivated by applications in computational linguistic, \cite{ChamparnaudGKN08} discusses composition algorithms for weighted multi-tape automata.
Our intersection algorithm (Section~\ref{sec:algor-inters}) shares with \cite{ChamparnaudGKN08} the idea of accumulating delays in states; on the other hand, \cite{ChamparnaudGKN08} expresses intersection as the combination of simpler composition operations, and targets weighted automata with \emph{bounded} delays---a syntactic restriction that guarantees that reading heads are synchronized---suitable for the applications of \cite{ChamparnaudGKN08} but not for the program verification examples of Section~\ref{sec:impl-exper}.
Another application is reasoning about databases of strings (typically representing DNA sequences), for which multi-tape transducers have been used~\cite{GrahneNU99}.

Much recent research targeted the invention of decision procedures for expressive first-order fragments useful in reasoning about functional properties of programs.
Interpreted theories supporting operations on words, such as some of the examples in the present papers, include theories of arrays \cite{BMS06-vmcai,HIV08-lpar}, strings \cite{KGGHE09}, multi-sets~\cite{KPSW10}, lists~\cite{WiesMK12}, and sequences~\cite{Fur10-ATVA10}.
All these contributions (with the exception of \cite{HIV08-lpar}) use logic-based techniques, but automata-theoretic techniques are ubiquitous in other areas of verification---most noticeably, model-checking~\cite{VW86}.
The present paper has suggested another domain where automata-theoretic techniques can be useful.

\paragraph{Acknowledgements.}
Thanks to St\'ephane Demri for suggesting looking into automatic 
theories during a chat at ATVA 2010; and to Cristiano Calcagno for stimulating discussions.
Many thanks to the reviewers of several conferences, and in particular to the anonymous Reviewer~1 of the 9th International Computer Science Symposium in Russia (CSR 2014), who pointed out---in an extremely detailed and insightful review of a previous version of this paper---some non-trivial errors about the completeness analysis for the algorithm of Section~\ref{sec:positive-results}.



\clearpage
\newpage
\appendix

\section{Multi-Tape Automata: Negative Results (Section~\ref{sec:negative-results})}

While Theorem~\ref{th:not-semidecidable} subsumes the undecidability of the rational intersection problem, we can give independent proofs of two variants of the problem.
The first one uses a reduction from Post's correspondence problem; the second one, given later, a reduction from the disjointness problem for multi-tape automata.

\begin{theorem} \label{th:undec-async-intersect}
The rational intersection problem is undecidable.
\end{theorem}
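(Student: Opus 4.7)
The plan is to reduce Post's Correspondence Problem (PCP), which is undecidable, to the rational intersection problem. Given a PCP instance $P = \{(u_i, v_i)\}_{i=1}^k$ over alphabet $\Gamma$, I would first encode $P$ by two deterministic rational automata. Using an auxiliary index alphabet $I = \{1, \ldots, k\}$ and tapes $S, W$, build $A^u(S, W)$ accepting 2-words $\langle s, w\rangle$ with $s = i_1 \cdots i_n \in I^*$ and $w = u_{i_1} \cdots u_{i_n} \in \Gamma^*$, and analogously $A^v(S, W)$ for the $v_i$'s. Both are elementary deterministic multi-tape automata, and $\lang{A^u} \cap \lang{A^v}$ is empty if and only if $P$ has no PCP solution.

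To turn the (undecidable) emptiness of this intersection into a rationality question, I would amplify with a known non-rational intersection that the paper has already exhibited: namely $\aut{fig:Aeq}(X, Y)$ and $\aut{fig:Acat}(X, Y, Z)$, whose intersection $L^{*} = \lang{\aut{fig:Aeq}} \cap \lang{\aut{fig:Acat}}$ is remarked (in the informal overview of Section~\ref{sec:algor-inters}) to be non-rational. With the tape names $X, Y, Z$ chosen disjoint from $S, W$, I would form
\[
A_P \;=\; A^u(S, W) \parallel \aut{fig:Aeq}(X, Y), \qquad B_P \;=\; A^v(S, W) \parallel \aut{fig:Acat}(X, Y, Z),
\]
where $\parallel$ denotes the straightforward asynchronous product on disjoint tape sets (a state of the product nondeterministically interleaves reads of its two components, each operating on its own tapes). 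Since $\parallel$ composes two rational automata over disjoint tapes with no synchronization required, both $A_P$ and $B_P$ are rational.

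By construction $\lang{A_P} \cap \lang{B_P} = (\lang{A^u} \cap \lang{A^v}) \times L^{*}$. If $P$ has no solution, the first factor is empty, so the intersection is $\emptyset$ and therefore trivially rational. If $P$ has a solution, the first factor is some nonempty set $T$, and $T \times L^{*}$ must be non-rational: any putative automaton recognizing $T \times L^{*}$ would, after erasing its reads on the $S, W$ tapes (turning them into $\epsilon$-moves) and eliminating the $\epsilon$-moves, yield an automaton recognizing the projection onto $X, Y, Z$, which is precisely $L^{*}$; this would contradict the non-rationality of $L^{*}$, since rational relations are closed under single-component projection~\cite{RabinScott59}. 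A decision procedure for rational intersection would therefore decide PCP, which is impossible. The main obstacle is the rigorous handling of the disjoint-tape parallel product and of the projection step, both routine constructions but requiring some care to match the tape-assignment function $\tau$ of Definition~\ref{def:n-tape-aut}; no novel combinatorial argument is needed, since the non-rational ingredient $L^{*}$ is already provided by the paper's own running examples.
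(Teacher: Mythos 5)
Your proposal is correct in substance but takes a genuinely different route from the paper's own proof. The paper also reduces from PCP, but through a single intertwined construction: it defines two 3-tape languages $L_1, L_2$ (index sequence, $x$-concatenation, $y$-concatenation with a marker $\#$) whose intersection consists of words $\langle (i_1\cdots i_k)^n,\, x_{i_1}\cdots x_{i_k},\, (y_{i_1}\cdots y_{i_k})^n \,\#\, (x_{i_1}\cdots x_{i_k})^n \rangle$ precisely when $i_1,\ldots,i_k$ solves the PCP instance; a solution then forces a non-regular $u^n \# v^n$ pattern in a \emph{single-component} projection, while no solution makes the intersection a rational singleton. You instead factor the reduction into two independent gadgets---a PCP pair $A^u, A^v$ whose intersection-emptiness is the undecidable question, and a fixed pair $\aut{fig:Aeq}, \aut{fig:Acat}$ with non-rational intersection $L^{*}$---glued by a disjoint-tape product, transferring non-rationality back via projection. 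This modular amplification is appealing: it shows abstractly that undecidability of intersection emptiness plus \emph{one} concrete non-rational intersection suffices, and it is exactly the trick that upgrades the disjointness-based argument which the paper's appendix notes yields only a weaker ``constructive'' variant of the theorem. The trade-offs: the paper's construction works with 3 tapes (so the result holds for all $n \geq 3$), whereas yours needs 5; and yours leans on the non-rationality of $L^{*} = \{\langle x, x, xx\rangle \mid x \in \{a,b\}^*\}$, which the paper only asserts informally in Section~\ref{sec:algor-inters}---for self-containedness you should prove it, e.g., by projecting onto tapes $X, Z$ to get the standard non-rational relation $\{\langle x, xx\rangle\}$, or via the pumping lemma in the appendix (Lemma~\ref{lemma:pumping}).

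One point needs an actual fix. As written, $A^u$ and $A^v$ accept index strings $s = i_1 \cdots i_n \in I^*$ with $n \geq 0$, so $\langle \epsilon, \epsilon \rangle \in \lang{A^u} \cap \lang{A^v}$ unconditionally; the factor $T$ is then never empty, $T \times L^{*}$ is non-rational regardless of whether the instance is solvable, and the reduction collapses (your hypothetical decider would answer ``non-rational'' on every input). Since PCP requires a nonempty index sequence, you must build $A^u, A^v$ to accept only $n \geq 1$---a trivial modification, after which your claimed equivalence ``empty iff no solution'' holds and the argument goes through. A smaller presentational point: you cite \cite{RabinScott59} for single-component projection, but your erasure step projects onto the three components $X, Y, Z$ simultaneously; the multi-component version is equally standard (projection is a monoid morphism on $(\Sigma^*)^n$, or operationally your $\epsilon$-erasure plus $\epsilon$-elimination), but it is not literally the cited statement, so justify it explicitly, taking care that the erasure also drops the $\emk$ reads on the discarded tapes and that the resulting automaton respects the tape-assignment $\tau$ of Definition~\ref{def:n-tape-aut}.
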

\begin{proof}
We prove undecidability by reduction from Post's correspondence problem \linebreak(PCP): given a finite set 
$
\left\{
\langle x_1, y_1\rangle,
\ldots,
\langle x_m, y_m\rangle
\right\}\,,
$
of 2-words over $\Sigma$ (with $|\Sigma| \geq 2$) 
determine if there exists a sequence $i_1, i_2, \ldots, i_k$ of indices from $1, \ldots, m$ (possibly with repetitions) such that
$
x_{i_1} x_{i_2} \cdots x_{i_k} 
\ =\ 
y_{i_1} y_{i_2} \cdots y_{i_k}
$.

Given an instance of PCP, define $X = \{x_1, \ldots, x_m\}$ and $Y = \{y_1, \ldots, y_m\}$.
Assume, without loss of generality, that the symbols $1, \ldots, m$ and a marker $\#$ are not in $\Sigma$.
Consider the two languages $L_1, L_2$ defined as:
\begin{align*}
L_1 &= \{ \langle i_1\cdots i_{\alpha}, x_{i_1}\cdots x_{i_{\alpha}}, y_{i_1}\cdots y_{i_{\alpha}} \:\#\: \widetilde{x}  \rangle \mid \alpha \geq 0 \text{ and } \widetilde{x} \in X^* \}\,, \\
L_2 &= \{ \langle j_1\cdots j_{\beta}, y_{j_1}\cdots y_{j_{\beta}}, \widetilde{y} \:\#\: x_{j_1}\cdots x_{j_{\beta}}  \rangle \mid \beta \geq 0 \text{ and } \widetilde{y} \in Y^* \}\,,
\end{align*}
where the $i_k$'s and $j_k$'s are indices from $1, \ldots, m$.
It is not difficult to see that $L_1$ and $L_2$ are rational languages.
An automaton accepting $L_1$ works as follows: for each element $i_k$ on the first tape, it checks that the corresponding $x_{i_k}$ and $y_{i_k}$ respectively appear on the second and third tape; finally, it checks that only elements in $X$ appear after the $\#$ on the third tape.
An automaton for $L_2$ can follow a similar logic.

The intersection $L = L_1 \cap L_2$ consists of all words of the form
\[
\langle (i_1 \cdots i_k)^n, x_{i_1} \cdots x_{i_k}, (y_{i_1} \cdots y_{i_k})^n \# (x_{i_1} \cdots x_{i_k})^n \rangle
\]
for $n \geq 0$, such that $i_1, \ldots, i_k$ is a solution of the PCP.
If the PCP has no solution, then $L$ is the singleton $\langle \epsilon, \epsilon, \epsilon \rangle$ which is clearly rational; conversely, if the PCP has no solution, $L$ contains infinitely many words but is not rational, because its projection onto the third component has the form $u^n \# v^n$, which is non-regular~\cite{CAF-survey}.
\qedhere
\end{proof}

The proof of Theorem~\ref{th:undec-async-intersect} uses 3-words, which implies that the result carries over to any number of tapes $n \geq 3$; is it possible to generalize to $n \geq 2$?
It seems difficult to simultaneously express the PCP solution requirements and the non-regularity of one of the components.
However, a slightly weaker (but practically as useful) undecidability result for $n \geq 2$ tapes follows easily from the undecidability~\cite{CAF-survey} of the disjointness problem for rational languages (that is, determining whether the intersection $L_1 \cap L_2$ of two rational languages is empty).
We can prove that the following problem $P$ is undecidable: \emph{constructively} determine whether the intersection $L_1 \cap L_2$ of two rational languages $L_1, L_2$ is rational; ``constructively'' refers to the fact that we require that, if $L_1 \cap L_2$ is rational, then we can build an automaton $A_{1,2}$ such that $\lang{A_{1,2}} = L_1 \cap L_2$.
Assume, to the contrary, that $P$ is decidable.
Then, we have a decision procedure for the disjointness problem: if $L_1 \cap L_2$ is rational, construct and automaton $A_{1,2}$ that accepts it, and test $A_{1,2}$ for emptiness; otherwise, $L_1 \cap L_2$ is not rational, and hence certainly $L_1 \cap L_2 \neq \emptyset$.

\pagebreak
\section{Under-Approximation of Intersection (Section~\ref{sec:algor-inters})}
\label{app:under-appr-inters}

\begin{figure}[!h]
\begin{lstlisting}
async_next (D, q): SET [$\langle q', h_1, \ldots, h_t \rangle$]
   -- $q$ is always reachable from itself
   Result $:= \{ \langle q, \epsilon, \ldots, \epsilon\rangle \}$
   -- for every tape other than $q$'s
   for each $t_i \in \{t_1^D, \ldots, t_t^D\} \setminus \tau^D(q)$ do
      $P := \text{all shortest paths } p \text{ from }q \text{ to some }\overline{q}  \text{ such that: }$
            $\tau^D(\overline{q}) = t_i \text{ and no state } \widetilde{q} \text{ with } \tau^D(\widetilde{q}) = t_i \text{ appears in } p \text{ before } \overline{q}$
      -- each element in $P$ is a sequence of transitions
      for each $e_1\,\cdots\,e_m \in P$ do
         $h_1, \ldots, h_t := \epsilon$
         -- each transition is a triple (source, input, target)
         for each $(q_1, \sigma, q_2) \in e_1\,\cdots\,e_m$ do
            -- add the transition to the sequence corresponding
            -- to its source's tape
            $h_{\tau^D(q_1)} := h_{\tau^D(q_1)} + (q_1, \sigma, q_2)$
         -- $q_2(e_m)$ is the target state of the last transition $e_m$
         Result $:=$ Result $\cup \langle q_2(e_m), h_1, \ldots, h_t\rangle$

new_states (P: SET[$\langle p, h_1, \ldots, h_m \rangle$], Q: SET[$\langle q, k_1, \ldots, k_n \rangle$]): S
   S $:= \emptyset$
   for each $\langle p, h_1, \ldots, h_m \rangle \in P$, $\langle q, k_1, \ldots, k_n \rangle \in Q$ do
      -- if delays on synchronized tapes are consistent
      if $\forall i \in T^A \cap T^B:$ cons($h_i, k_i$) then
         for each $t \in T$ do $S := S \cup \{\langle p, q, t, h_1, \ldots, h_m, k_1, \ldots, k_n \rangle \}$ end
   -- Here $Q$ denotes $C$'s set of states, not the input argument
   for each $r \in S$ do if $r \not\in Q$ then s.push (r) end

compose_transition (P: SET[$(p, h_1, \ldots, h_m)$], Q: SET[$(q, k_1, \ldots, k_n)$], 
                    d: ($h_1, \ldots, h_m, k_1, \ldots, k_n$), $\sigma$, r)
   $J_A := \{ (p, h_1\,h_1', \ldots, h_m\,h_m') \mid (p, h_1', \ldots, h_m') \in P \}$
   $J_B := \{ (q, k_1\,k_1', \ldots, k_n\,k_n') \mid (q, k_1', \ldots, k_n') \in Q \}$
   $S :=$ new_states ($J_A, J_B$)
   for each $r' \in S$ do $\delta := \delta \cup \{r, \sigma, r'\}$ end
\end{lstlisting}
\caption{Routines \lstinline|async_next|, \lstinline|new_states|, \lstinline|compose_transition|.}
\label{fig:composeTransition}
\label{fig:newStates}
\label{fig:asyncNext}
\end{figure}

\begin{figure}[!h]
\begin{lstlisting}
intersect (max_states, max_delay)
  $Q := \emptyset$ ; $s := \emptyset$
  -- Initially reachable states
  $J_A := \bigcup_{i \in Q_0^A}$ async_next (A, i)  ;  $J_B := \bigcup_{i \in Q_0^B}$ async_next (B, i)
  S $:=$ new_states ($J_A$, $J_B$)  ;   $Q_0 := S$
  until $s = \emptyset$ or $|Q| \geq$ max_states loop
    r $:=$ ($q_a, q_b, t, h_1, \ldots, h_m, k_1, \ldots, k_n$) $=$ s.pop  
    if $\forall d \in \{ h_1, \ldots, k_n \}: |d| \leq $ max_delay then $Q := Q \cup \{ r \}$ else continue
    if $t \in T^A \cap T^B$ then -- event on shared tape
      if $h_t = (u_a, \sigma, u_a') \overline{h_t}$ and $k_t = (u_b, \sigma, u_b') \overline{k_t}$ then
        -- delayed transition on both $A$ and $B$
        $P :=$ async_next $(A, q_a)$ ; $Q :=$ async_next $(B, q_b)$
        $d := (h_1, \ldots, \overline{h_t}, \ldots, h_m, k_1, \ldots, \overline{k_t}, \ldots, k_n)$
        compose_transition ($P, Q, d, \sigma, r$)
      elseif $h_t = (u_a, \sigma, u_a') \overline{h_t}$ and $k_t = \epsilon$ then
        -- delayed transition on $A$, normal transition on $B$
        $P :=$ async_next $(A, q_a)$
        $Q := \{$ async_next ($B, q_b'$) $\mid (q_b, \sigma_b, q_b') \in \delta^B \land \sigma = \sigma_b \land \tau^B(q_b) = t \}$
        $d := (h_1, \ldots, \overline{h_t}, \ldots, h_m, k_1, \ldots, k_n)$
        compose_transition ($P, Q, d, \sigma, r$)
      elseif $h_t = \epsilon$ and $k_t = (u_b, \sigma, u_b') \overline{k_t}$ then
        -- delayed transition on $B$, normal transition on $A$
        $\cdots$
      elseif $h_t = k_t = \epsilon$ then
        -- normal transition on both $A$ and $B$
        for each $\sigma \in \Sigma$ do
          $P := \{$ async_next ($A, q_a'$) $\mid (q_a, \sigma_a, q_a') \in \delta^A \land \sigma_a = \sigma \land \tau^A(q_a) = t \}$
          $Q := \{$ async_next ($B, q_b'$) $\mid (q_b, \sigma_b, q_b') \in \delta^B \land \sigma_b = \sigma \land \tau^B(q_b) = t \}$
          $d := (h_1, \ldots, h_m, k_1, \ldots, \ldots, k_n)$
          compose_transition ($P, Q, d, \sigma, r$)
    elseif $t \in T^A \setminus T^B$ then -- event on $A$'s non$\text{-}$shared tape
      if $h_t = (u_a, \sigma, u_a') \overline{h_t}$ then -- delayed transition on $A$, $B$ stays
        $P :=$ async_next $(A, q_a)$ ; $Q := \{ (q_b, \epsilon, \ldots, \epsilon) \}$
        $d := (h_1, \ldots, \overline{h_t}, \ldots, h_m, k_1, \ldots, k_n)$
        compose_transition ($P, Q, d, \sigma, r$)
      elseif $h_t = \epsilon$ then -- normal transition on $A$, $B$ stays
          $Q := \{ (q_b, \epsilon, \ldots, \epsilon) \}$
          for each $\sigma \in \Sigma$ do
            $P := \{$ async_next ($A, q_a'$) $\mid (q_a, \sigma_a, q_a') \in \delta^A \land \sigma_a = \sigma \land \tau^A(q_a) = t \}$
            $d := (h_1, \ldots, h_m, k_1, \ldots, k_n)$
            compose_transition ($P, Q, d, \sigma, r$)
    elseif $t \in T^B \setminus T^A$ then -- event on $B$'s non$\text{-}$shared tape
      $\cdots$
\end{lstlisting}
\caption{Routine \lstinline|intersect|.}
\label{fig:intersect}
\end{figure}

\clearpage

\section{Correctness and Completeness (Section~\ref{sec:correct-precise})}

\begin{lemma}[Pumping lemma] \label{lemma:pumping}
Let $L$ be an $n$-rational language. 
Then there exists an integer $N \geq 1$ such that every word $\langle x_1, \ldots, x_n \rangle \in L$ where $|x_1| + \cdots + |x_n| \geq N$ can be written as $\langle p_1q_1r_1, \ldots, p_nq_nr_n \rangle$, with $q_k \neq \epsilon$ for at least one $1 \leq k \leq n$, and $\langle p_1 q_1^m r_1, \ldots, p_nq_n^mr_n\rangle$ is in $L$ for every $m \in \naturals$.
\end{lemma}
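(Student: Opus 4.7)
The target is a standard pumping-style result, transplanted to the multi-tape setting. The plan is to fix an $n$-tape automaton $A = \langle \Sigma, Q, \delta, Q_0, F, T, \tau \rangle$ with $\lang{A} = L$ and to pick the pumping constant $N$ as a function of $|Q|$ and $n$; I will take $N = (n+1)\,|Q| + 1$. Given any word $\langle x_1, \ldots, x_n\rangle \in L$ with $|x_1| + \cdots + |x_n| \geq N$, I fix an accepting run $\rho = \rho_0 \cdots \rho_M$, where $M = |x_1| + \cdots + |x_n| + n$ (each tape contributes its $|x_k|$ input symbols plus one final $\emk$). The goal is then to locate a repeating state in $\rho$ such that the loop between the two occurrences touches none of the end-markers.

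The key step is the pigeonhole argument. The $n$ end-marker transitions partition $\rho$ into at most $n+1$ maximal contiguous segments of transitions that read only symbols from $\Sigma$; together, these segments account for all $|x_1| + \cdots + |x_n|$ non-end-marker transitions. Since the total is at least $(n+1)\,|Q|+1$, some segment $\sigma$ must contain strictly more than $|Q|$ transitions, hence strictly more than $|Q|$ configurations. By pigeonhole, two distinct configurations $\rho_i$ and $\rho_j$ inside $\sigma$ (with $i < j$) share the same state $q \in Q$. Crucially, the transitions between indices $i$ and $j$ consume only symbols from $\Sigma$ and never any $\emk$.

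For $1 \leq k \leq n$, let $p_k$ be the substring of $x_k$ consumed by transitions strictly before $\rho_i$, let $q_k$ be the substring consumed between $\rho_i$ and $\rho_j$, and let $r_k$ be the substring consumed after $\rho_j$ but before the $\emk$ on tape $k$ (which still occurs somewhere later in $\rho$, unaffected by the loop). Then $x_k = p_k q_k r_k$ and at least one $q_k$ is nonempty, because $j > i$ means at least one non-end-marker transition lies in the loop, and every such transition consumes one symbol from exactly one tape. Since $\rho_i$ and $\rho_j$ share the state $q$ and the configurations differ only in the tape contents, one can repeat the loop $m$ times or excise it ($m=0$), yielding for every $m \in \naturals$ a legal accepting run of $A$ on $\langle p_1 q_1^m r_1, \ldots, p_n q_n^m r_n \rangle$; in particular the end-markers are reached and consumed exactly as in $\rho$, and the final state is still the same accepting state.

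The main obstacle, and the one subtlety that distinguishes this from the classical one-tape argument, is precisely the necessity of avoiding end-marker transitions inside the loop: if $q_k$ contained an $\emk$, then no pumping would be possible because each tape carries only one end-marker. The inflated constant $N = (n+1)|Q|+1$ exists exactly to force a long enough segment between consecutive end-marker transitions, and the rest of the argument is then a straightforward transcription of the single-tape pumping lemma to the $n$-tape configuration space.
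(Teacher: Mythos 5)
Your proof is correct and follows the same basic strategy as the paper's---pigeonhole on a repeated state along an accepting run, then iterate the loop---but your version is genuinely more careful and in fact repairs a gap in the paper's own argument. The paper sets $N = |Q|$ (the number of states of the accepting automaton), pigeonholes over the \emph{entire} run, and asserts that looping over the repeated segment yields accepted words; this silently assumes the loop contains no end-marker transition. For $n \geq 2$ tapes that assumption is not automatic: a state reading tape $k$ may recur on opposite sides of another tape's $\emk$-transition, and such a loop cannot be iterated because each tape carries exactly one $\emk$ (for $n = 1$ the issue vanishes, since there the $\emk$-transition is necessarily the last one before the final state). Your inflated constant $N = (n+1)|Q| + 1$, combined with the pigeonhole over the at most $n+1$ maximal $\emk$-free segments of the run, forces the repetition to occur strictly inside one segment, which is exactly what legitimizes the final pumping step; your observation that each loop transition consumes one $\Sigma$-symbol on exactly one tape also properly justifies $q_k \neq \epsilon$ for some $k$, a point the paper states without argument. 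One caveat you share with the paper: both proofs implicitly assume the accepting run consumes all input symbols plus the $n$ end-markers (your $M = |x_1| + \cdots + |x_n| + n$), whereas the paper's definition of acceptance only requires reaching a final state, so a run may in principle halt with input unread, in which case the number of $\Sigma$-transitions can fall short of $|x_1| + \cdots + |x_n|$ and the segment count no longer suffices; the lemma survives because unread suffix symbols can be pumped trivially (the run never inspects them), but a fully rigorous write-up should dispatch this case explicitly. Modulo that shared convention, your proof is strictly tighter than the paper's.
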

\begin{proof}
Let $A_L$ be an automaton accepting $L$; then, the number of states of $A_L$ is the pumping length $N = M$.
Consider a word $w = \langle x_1, \ldots, x_n \rangle \in L$ with length $|x_1| +\, \cdots\, + |x_n| \geq N$.
A computation accepting $w$ visits $N + 1$ states of $A_L$; by the pigeonhole principle, there exists a state $s$ in the sequence which is visited twice.
The sequence of symbols read in the transitions that go from the first to the second visit of $s$ determines an $n$-word $\langle q_1, \ldots, q_n \rangle$ with at least one $q_k \neq \epsilon$.
Looping an arbitrary number of times over the sequence that starts and ends on $s$  determines words that are all accepted by $A_L$, and hence belong to $L$.
\qedhere
\end{proof}

\section{Asynchronous Rational Theories} \label{sec:asynchr-autom-theor}
The \emph{signature} $S_\Theta = C \cup F \cup R$ of a first-order theory $\Theta$ is a set of constant $C$, function $F$, and predicate $R$ symbols.
A \emph{quantifier-free formula} of $\Theta$ is built from constant, function, and predicate symbols of $S_\Theta$, as well as variables $x, y, z, \ldots$ and logical connectives $\limpl, \lor, \land, \neg$.
An \emph{interpretation}\footnote{For simplicity, we do not discuss how to \emph{axiomatize} the semantics of interpreted items.} $I_\Theta$ assigns constants, functions, and predicates over a domain $D$ to each element of $C$, $F$, and $R$.
It is customary that $R$ include an equality symbol $=$ with its natural interpretation.
Then, assume without loss of generality that $\Theta$ is \emph{relational}, that is $F = \emptyset$; to this end, introduce a $(m+1)$-ary predicate $R_f$ for every $m$-ary function $f$ such that $R_f(x_1, \ldots, x_m, y)$ holds iff $f(x_1, \ldots, x_m) = y$.
A \emph{model} $M$ of a formula $F$ of $\Theta$ is an assignment of values to the variables in $F$ that is consistent with $I_\Theta$ and makes the formula evaluate to true; write $M \models F$ to denote that $M$ is a model of $F$.
The set of all models of a formula $F$ under an interpretation $I_\Theta$ is denoted by $\allmodels{F}{I_\Theta}$.
$F$ is \emph{satisfiable} in the interpretation $I_\Theta$ if $\allmodels{F}{I_\Theta} \neq \emptyset$; it is \emph{valid} if $\allmodels{F}{I_\Theta}$ contains all variable assignments that are consistent with $I_\Theta$.

Similar to automatic presentations, a \emph{rational presentation} of a first-order theory $\Theta$ consists of:
\begin{enumerate}
\item A finite alphabet $\Sigma$;
\item A surjective mapping $\nu: S \to D$, with $S$ a regular subset of $\Sigma^*$, that defines an encoding of elements of the domain $D$ in words over $\Sigma$;
\item A 2-tape automaton $\autd{\mathsf{eq}}$ whose language is the set of 2-words $\langle x,y \rangle \in (\Sigma^*)^2$ such that $\nu(x) = \nu(y)$;
\item For each $m$-ary relation $R_m \in R$, an $m$-tape automaton $\autd{R_m}$ whose language is the set of $m$-words $\langle x_1,\ldots, x_m \rangle \in (\Sigma^*)^m$ such that $R_m(\nu(x_1), \ldots, \nu(x_m))$ holds.
\end{enumerate}
A first-order theory with rational presentation is called \emph{rational theory}.
If the automata of the presentation are deterministic (resp.\ synchronous, asynchronous) the theory is also called deterministic (resp.\ synchronous, asynchronous).

\begin{example}[Rational theory of concatenation]
The theory of concatenation over $\{a,b\}^*$ is the first-order theory with constant $\epsilon$ (the empty sequence), sequence equality $=$, and concatenation predicate $R_{\cat}$ such that $R_{\cat}(x,y,z)$ holds iff $z$ is the concatenation of $x$ and $y$.
This theory is asynchronous rational, with $\Sigma = \{a,b\}$, $\nu$ the identity function, $\autd{\mathsf{eq}}$ as in Figure~\ref{fig:Aeq}, and $\autd{R_{\cat}}$ as in Figure~\ref{fig:Acat}.
\end{example}

Consider a quantifier-free formula $F$ of a rational theory $\Theta$.
To decide if $F$ is satisfiable we can proceed as follows.
First, build an automaton $\autd{F}$ that recognizes exactly the models of $F$.
This is done by composing the elementary automata of the theory according to the propositional structure of $F$; namely, for sub-formulas $G, H$, negation $\neg G$ corresponds to complement $\comp{\autd{G}}$, disjunction $G \lor H$ corresponds to union $\autd{G} \cup \autd{H}$, and conjunction $G \land H$ corresponds to intersection $\autd{G} \cap \autd{H}$.
To verify if $F$ is valid, test whether $\autd{\neg F} = \comp{\autd{F}}$ is empty: $\lang{\autd{\neg F}}$ is empty iff $\neg F$ is unsatisfiable iff $F$ is valid.

We can apply this procedure only when the automaton $\autd{F}$ is effectively constructible, which is not always the case for asynchronous rational theories because asynchronous automata lack some closure properties (see Section~\ref{sec:clos-prop-decid})---intersection, in particular.
The following section, however, shows some non-trivial examples of formulas whose rational presentation falls under the criterion of Corollary~\ref{lem:completeness} (and whose components to be complemented are deterministic), hence we can decide their validity by means of automata constructions.

\section{Implementation and Experiments (Section~\ref{sec:impl-exper})}
\label{app:impl-exper-sect}

\begin{align*}
\vc_0 & \equiv\ 
|y| \geq m \land y = \rest(x)
\;\limpl\;
|x| \geq n \land m = n - 1 \\
\vc_1 & \equiv\  |y| \geq m \land y = \rest(x) \;\limpl\; |x| \geq n \land m = n - 1 \\
\mathsf{cat}_0 & \equiv\ 
x \cat y = z \land \mathit{last}(z) = u \land \mathit{last}(y) = v \;\limpl\; u = v \\
\ce_1 & \equiv\  |y| \geq m \land y = \rest(x) \;\limpl\; |x| < n \\
\ce_2 & \equiv\  
|\Result| = u \land u = |y| - m \land y = \rest(x) 
\;\limpl\; 
|\Result| = v
\end{align*}
\begin{multline*}
\vc_2 \equiv\ 
|\Result| = u \land u = |y| - m \land y = \rest(x)
\\ \;\limpl\;
|\Result| = v \land v = |x| - n \land m = n-1 \land |x| = n
\end{multline*}


\end{document}